\let\cl@chapter\undefined
\begin{document}

\title{Relative-perfectness of discrete gradient vector fields and multi-parameter persistent homology}

\author{Claudia Landi         \and
        Sara Scaramuccia
}

\institute{Claudia Landi \at
              University of Modena and Reggio Emilia, Italy\\              \email{claudia.landi@unimore.it}                      \and
           Sara Scaramuccia \at
              Politecnico di Torino\\
              \email{sara.scaramuccia@polito.it}
}

\date{Received: date / Accepted: date}

\maketitle

\begin{abstract}
The combination of persistent homology and discrete Morse theory has proven very effective in visualizing and analyzing big and heterogeneous data.
Indeed, topology provides computable and coarse summaries of data independently from specific coordinate systems and does so robustly to noise.
Moreover, the geometric content of a discrete gradient vector field is very useful for visualization purposes.
The specific case of multivariate data still demands for further investigations, on the one hand, for computational reasons, it is important to reduce the necessary amount of data to be processed.
On the other hand, for analysis reasons, the multivariate case requires the detection and interpretation of the possible interdepedance among data components. 
To this end, in this paper we introduce and study a notion of perfectness for discrete gradient vector fields with respect to multi-parameter persistent homology, called relative-perfectness. 
As a natural generalization of usual perfectness in Morse theory for homology, relative-perfectness entails having the least number of critical cells relevant for multi-parameter persistence.
As a first contribution, we support our definition of relative-perfectness by generalizing Morse inequalities to the filtration structure where homology groups involved are relative with respect to subsequent sublevel sets.
In order to allow for an interpretation of critical cells in $2$-parameter persistence, our second contribution consists of two inequalities bounding Betti tables of persistence modules from above and below, via the number of critical cells. 
Our last result is the proof that existing algorithms based on local homotopy expansions allow for efficient computability over simplicial complexes up to dimension $2$.

\keywords{multiparameter persistent homology \and discrete Morse theory \and persistence modules \and Betti tables \and Morse inequalities.}
 \subclass{MSC 55N35 \and 55U10 \and 37B35 \and 13D02.}
\end{abstract}

\section{Introduction}
\label{sec:intro}

In recent years, the impressive growth of data and their heterogeneity has increased the demand for new ways of visualizing and analyzing data. 
To this purpose, many techniques rooted in shape analysis have turned out to be successful, in particular those based on topology~\cite{Heine2016}.

\paragraph{Topology-based techniques.} 
Most of topology-based techniques find their theoretical roots in the interplay between homology and critical points of functions as provided by Morse Theory~\cite{Milnor1963}. 
Homology is an algebraic theory to detect topological features of a domain such as  connected components, loops, cavities and higher-dimensional holes, each counted by its corresponding Betti number.
Morse Theory establishes a link between critical points and Betti numbers.
In particular, weak Morse inequalities state that the number of critical points of functions, called Morse functions, bound from above the Betti numbers of the domain. 
In case of equality, the Morse function is called a {\em perfect function} as the critical points of such function provide an optimal bound for the Betti numbers of that domain. 
Not all domains admit a perfect function. Only in a few cases, such as for PL triangulated $n$-spheres~\cite{EellsKuiper1962}, the existence of a perfect function is guaranteed.
This explains why tighter properties or low dimensions are usually assumed on the domain in order to ensure perfectness.

In the context of visual analytics and data analysis, the above mentioned interplay has produced successful strategies. 
In the former case, Morse Theory is exploited in topological consistent segmentation techniques of scalar field domains: the key notion of a Morse complex associated to a discrete gradient vector field, where integral lines connect critical points, provides a simplified representations~\cite{DeFloriani2015cgf}.
In the latter case, the demand for meaningful concise representations of heterogeneous data motivates Topological Data Analysis - TDA~\cite{CarlssonBAMS}. In the pipeline of TDA, one starts from a point cloud representing data with unknown structure. Some combinatorial shape is associated to the point cloud, e.g., a simplicial complex. 
Such simplicial complex is filtered by a nested sequence of complexes, called a filtration, e.g., by taking sublevel sets with respect to some measurements on data. 
Persistent homology~\cite{EdHaBook} gives a representation of homological changes along the filtration. 
The signature thus obtained as data summary is called a persistence diagram.

\paragraph{Interplay between persistent homology and Morse theory.}
In order to relate persistent homology to Morse theory, it is usual to resort to the Forman's discrete counterpart to Morse theory \cite{For98}.
The crucial observation is that birth-death values in a persistence diagram correspond to pairings of critical cells with respect to a discrete gradient vector field which is somehow ``consistent'' with persistence in the sense that persistent homology is preserved in the associated Morse complex. 
On the one hand, this property is used as a data-driven way of removing noise, for instance in 3D-scalar field visualization~\cite{Felle14} to remove low persistence critical pairs via cancellations~\cite{For98}.
On the other hand, discrete Morse theory  provides a preprocessing tool to reduce the amount of data on which to compute persistent homology by retaining the meaningful information in terms of critical cells \cite{mischaikow2013reductionPH}. 
In this sense,  by considering only those critical cells corresponding to births and deaths of persistence, one can consider the associated Morse complex being optimal. 
For instance, the advantage of the algorithm in \cite{RobWooShe11}  over the one in \cite{king2005gradientAlgo} is that it retrieves all and only the critical cells that correspond to births and deaths in persistence, at least for complexes of up to dimension 2 if simplicial, and embedded in the 3D-Euclidean space if cubical. 
Here, the dimensional bound for simplicial complexes or the Euclidean embedding for cubical complexes exclude some shapes known not to admit perfect functions (such as the ``dunce hat'' shown in~\cref{fig:duncehat}).
Making this notion of optimality explicit in a way generalizable to the case of multivariate data is one of the aims of this paper.

\paragraph{Interpreting and analyzing multivariate data.}
Nowadays, one of the greatest challenges in visual analytics and data analysis consists in representing and interpreting multivariate data, that is data measured by multiple filtering functions or directly given in the form of vector fields, such as simulations based on models with partial differential equations, e.g., computational fluid dynamics, electromagnetic fields, or weather forecasts. 
However, multivariate data are often too large to be processed and their entire information is typically not derivable from methods independently acting on single components.

In this work, we address the problem of detecting the most significant piece of hidden information in multivariate data, and we try to give theoretical insights to the problem of capturing the interdependence among component values stated in~\cite{Kehrer2013}. 
An example of Morse-based techniques applied to visual analytics over multivariate data is provided in~\cite{Iuricich2016}. Therein, a discrete gradient vector field simultaneously consistent with two scalar fields (temperature and pressure) is applied to segment the Hurricane Isabel dataset:
the hurricane's eye is detected as the largest cluster of critical cells of the gradient  vector field.
In general, not all discrete gradient vector fields need to have the same amount of critical cells. 
This motivates us in investigating minimality in terms of number of cells.
To address that, 
our strategy consists in generalizing to the multi-parameter case the perfectness-optimality properties derived from the interplay between (persistent) homology computation and discrete Morse Theory.
All along the paper, the case of vectorial data or multiple measures will be called {\em multi-parameter} case as opposed to the {\em one-parameter} case of scalar field data and single measures.

Multi-parameter Persistent Homology - MPH~\cite{Carlsson2009} is a natural generalization of the usual one-parameter persistent homology where the application of homology to a multi-parameter filtration provides a multi-parameter persistence module. 
The multi-parameter case is much more complex than the one-parameter one in terms of both encoding of information and computability~\cite{Cagliari2010}.
Unfortunately, a multi-parameter counterpart to (discrete) Morse theory is only partially developed in~\cite{Allili2019}.

However, some analogies still hold.
these analogies to the one-parameter case are related to the possibility of computing discrete gradient vector fields being consistent with a multifiltration, meaning by that, that the associated Morse complex preserves the MPH information.
	Motivated by this fact, the algorithms~\cite{Allili2017,Allili2019,Iuricich2016} retrieve a discrete gradient vector field consistent with a multi-filtration.
The Morse-based reduction preprocessing  for computing MPH invariants is shown to be effective in~\cite{Scaramuccia+2020}.
 	
The construction of such gradient vector field is efficiently achieved, actually in linear time in~\cite{Iuricich2016} and  \cite{Allili2019} whenever the worst-case size of a cell star is negligible with respect to the whole complex size. 
The algorithm in \cite{Iuricich2016} improves that in  \cite{Allili2019} in terms of speed but it is equivalent to it in terms of retrieved critical cells \cite{Scaramuccia+2020}. 
However,   even for  these  algorithms, the question about whether they retrieve the minimum, also known as optimal,  number of critical cells necessary to get the same persistence modules was left as an open problem. In contrast, as mentioned above, it was answered positively, at least in low dimensions, for one-parameter filtrations in~\cite{RobWooShe11}.	

Having the  guarantee from the above mentioned papers that discrete gradient vector fields consistent with given  multi-filtrations can be easily constructed, it is now of key interest to understand whether it is possible to achieve the minimum number of critical cells while preserving persistent homology.  In order to answer to this question, it is convenient to develop an analogue for multi-parameter persistent homology of  the well-known standard Morse inequalities that relate the number of critical cells of a discrete gradient vector field to  the Betti numbers of the underlying cell complex. 

\paragraph{Contributions.}
As a  first contribution of this paper, 
we show that an analogue for multi-parameter persistent homology of  the well-known standard Morse inequalities can be obtained by replacing Betti numbers by a sort of relative Betti numbers defined using the dimension of the relative homology of subsequent sublevel sets. 
Therefore, by analogy with usual perfectness in Morse theory~\cite{LEWINER2003}, we define relative-perfectness as the property of attaining an equality between its critical cells and such relative Betti numbers. 
This provides a new optimality criterion for any algorithm retrieving a discrete gradient vector field compatible with a multi-filtration.

We observe that for one-parameter filtrations, relative-perfectness simply means that each critical cell corresponds to a positive (i.e., giving birth) or negative (i.e., giving death)  cell of exactly one  persistence pair. This is the same property yielding optimality in~\cite{RobWooShe11}.
	To go a little bit further, it is crucial to ask what kind of  information  the critical cells can carry about the corresponding persistence modules.
	In the context of multi-parameter persistent homology, births and deaths  are not paired in a single invariant like the persistence diagram, but separately detected by invariants known as Betti tables   \cite{Eisenbud2005syzygies}. The zeroth Betti table $\betti_0$ detects births and the first Betti table $\betti_1$ detects deaths.  For a bi-filtration there is also a second Betti table $\betti_2$.  As observed  in \cite{Knudson}, births and deaths in multi-parameter persistent homology do not necessarily happen due to the entrance of \lq\lq real'' critical cells in the multi-filtration, but can also be ascribed to the appearance of \lq\lq virtual'' critical cells. The latter are detected by the second Betti table $\betti_2$.  
	For a general multifiltration with $n$-parameters there might be non-trivial $\xi_0, \dots, \xi_n$ Betti tables.
	
	 Our second contribution of this paper consists of inequalities showing that, for general discrete gradients consistent with a bi-filtration, the number of critical cells bounds from above a linear combination of values from Betti tables thus providing an estimation of the latter ones. 
	In addition, in the case when the gradient vector field is relative-perfect, we can even deduce double inequalities with the number of critical cells bounding from below the number of births and deaths captured by    $\betti_0$ and $\betti_1$, up  to those due to ``virtual'' cells captured by $\betti_2$. 

	Since one of our inequalities holds under the relative-perfectness assumption, as a last contribution, we prove 
	that relative-perfectness for multi-filtrations can be actually  achieved by algorithms~\cite{Allili2019}\cite{Iuricich2016}, at least in the case of simplicial complex domains of dimension 2.
	Analogously to the one-parameter case, in~\cite{RobWooShe11} this can be seen as an optimality property among all consistent discrete gradients.

\paragraph{Organization of the paper.}
In Section \ref{sec:preliminaries}, we review the technical tools for this paper:  combinatorial cell complexes and their homology, filtrations and persistent homology, Betti tables of persistence modules, combinatorial Morse theory. We conclude the section illustrating some known connections between persistent homology and discrete Morse theory. In Section \ref{sec:multi-perfect}, we introduce the notions of multi-parameter Morse numbers and relative-perfectness for a discrete gradient vector field consistent with a multi-filtration and show, through their mutual relations, the link to the optimality notion.
We also show the consistency of the new definitions with known results,  in the case of one-parameter persistent homology, precisely connecting Morse numbers and births and deaths instants.
In Section \ref{sec:inequalities}, we extend such connection to the case of bi-filtrations showing the relation between Morse numbers and Betti tables for relative-perfect gradient vector fields. 
In Section \ref{sec:retrieval-perfect}, we prove that for simplicial complexes of dimension $2$ any generic assignment on the vertices permits the algorithmic construction of such relative-perfect gradient vector fields. 
Section \ref{sec:conclusions} contains a brief discussion on potentialities of these results and open questions. 


\section{Preliminaries}
\label{sec:preliminaries}

\subsection{Cell complexes and their homology}
Intuitively, cell complexes are  objects that  can be decomposed into
elementary pieces with simple topology, known as cells, and glued together  along their boundaries, themselves decomposed into faces. In this paper we describe cell complexes  following the combinatorial framework of Lefschetz \cite{Lef42}. Such abstraction  turns out useful  to describe the discrete Morse complex and its homology.

By a {\em cell complex} we  mean  a finite set $K$, whose elements are called {\em cells},  with a gradation $K_q$, $q\in\Z$, and an  incidence function $ \kappa \colon  K\times K\to \F$ over a field $\F$,
such that: 
$(i)$ $K_q =\emptyset$ for $q < 0$, 
$(ii)$ for every cell $\t \in K$ there exists a unique number $q$, called the  {\em dimension} of $\t$ and denoted $\dim \t$, such that $\t \in K_q$, 
$(iii)$ $\kappa(\t,\s)\ne 0$  implies $\dim \t = \dim \s + 1$,
$(iv)$   for each $\t$ and $\s$ in $K$, $\sum_{\rho\in K}\kappa(\t,\rho)\cdot\kappa(\rho,\s) = 0$.
The {\em dimension} of a cell complex is the maximal dimension of its cells. 
The nine shapes in \cref{fig:multi-filtration} are examples of  cell complexes whose dimensions range from 0 to 2.

A {\em facet} of $\t$ in $K$
is a cell $\s$ such that $\kappa(\t, \s)\ne 0$. Reciprocally, $\t$ is a {\em cofacet} of $\s$. Moreover,
 $\s$ is a {\em face} of $\t$ and $\t$ is a {\em coface} of $\s$ if there is a sequence of
cells ordered by the facet relation starting with $\s$ and ending with $\t$.
A {\em subcomplex}  $A$ of  $K$ is a subset of $K$ such that the restriction of the incidence function to $A\times A$ turns $A$ into a cell complex.

{\em Simplicial complexes} are an important class of cell complexes whose cells and face relations admit a fully combinatorial treatment.
The cells of dimension $q\ge 0$ in a simplicial complex are called $q${\em-simplices}.
Points, edges, triangles and tetrahedra correspond to $q$-simplices with $q$ equal to $0$, $1$, $2$, and $3$, respectively.
Assume a total ordering of $K_0$ is given and every simplex $\sigma$ in $K$ is coded as $[v_0, v_1, \dots, v_q]$, where the vertices $v_0, v_1, \dots, v_q$ are listed according to the prescribed ordering of $K_0$.
The {\em incidence function} 
\[
\kappa(\t, \s) := \left\{\begin{array}{ll}
(-1)^i & \mbox{if  $\t= [v_0,v_1, \ldots ,v_q]$ and  $\s= [v_0,v_1, \ldots,v_{i-1},v_{i+1},\ldots ,v_q]$}\\
0 & \mbox{otherwise.}\end{array}\right.
\] describes the incidence relations among simplices.

 For a cell complex $K$ and for all $q\in \Z$,
 we let $C_q(K)$ be the vector space generated by $K_q$ over $\F$.
We define a linear map called the {\em boundary operator} $\partial_q:C_q(K)\longrightarrow C_{q-1}(K)$ by setting
\[
\partial_q(\t) :=\sum_{\s\in K_{q-1}}\kappa(\t, \s)\s
\]

The pair $(C_*(K),\partial_* )$ is by definition the {\em chain complex} of $K$.
We define the {\em homology} of $K$ as the homology of its chain complex:  $H_q(K)=\ker(\partial_q)/\mathrm{ im} (\partial_{q+1})$, 
where, as usual, for any linear map $h$, the notations $\ker (h)$, $\im (h)$ and $\cok (h)$ denote the kernel, the image, and the cokernel of $h$ respectively, and we will make use of it all along the paper.

In the case of a simplicial complex one obtains  the usual simplicial homology. The dimension of the vector space $H_q(K)$ is often denoted by $\beta_q(K)$, and called the $q$th {\em Betti number} of $K$. Betti numbers  reveal topological features such as  the number of holes of the cell complex. In particular,  $\beta_0$, $\beta_1$, $\beta_2$ are the number of connected components,  tunnels, and voids, respectively.

In what follows we will be interested in applying homology to increasing families of  subcomplexes in order to turn homology into a tool for analyzing cell complexes at multiple scales.

\subsection{Multi-filtrations and multi-parameter persistent homology}\label{sec:multifiltration}

In its original setting, the persistent homology of a cell complex  is  defined as the homology of a nested family of subcomplexes parameterized by a single index. Nevertheless, generalizations have been proposed
which originate from different choices of the set of parameters. In this paper we will be interested in considering families of nested subcomplexes depending on $n\ge 1$ integer parameters.
 For every $u=(u_i),v=(v_i)\in\Z^n$, we write $u\preceq v$  if and only if $u_i\leq v_i$  for  $1\le i\le n$.  To specify that $u\preceq v$ and $u_j<v_j$  for some index $j$, we also write $u	\precneqq v$. 
 
An {\em $n$-filtration} (generally speaking, a  {\em multi-filtration}) of a cell complex $K$ is a family ${\mathcal K}=\{K^u\}_{u\in\Z^n}$  such that $K^u$ is a subcomplex of $K^v$ whenever $u\preceq v$,  $K^u=\emptyset$ for 
{$u\preceq 0$}, and $K^u=K$ whenever $u$ is sufficiently large. The value of the parameter $u$ will be called the {\em filtration grade}.

 A multi-filtration of a cell complex $K$ is said to be {\em one-critical} if, for every $\s\in K$, there exists one and only one filtration grade $u\in \Z^n$ such that $\s\in K^u-\bigcup_{i=1}^nK^{u-e_i}$, with $e_1,e_2,\ldots, e_n$  denoting the standard basis of $\Z^n$, where $e_i$ indicates the element of $\Z^n$ with all entries equal $0$ except for the $i$th entry equal $1$.
Throughout this paper we will always assume multi-filtrations to be one-critical, thus dropping the term one-critical for brevity. 
Moreover, we often refer to a multi-filtration as a multi-filtered complex.
 
Applying homology to a multi-filtered cell complex now yields multi-parameter persistent homology. Denoting by $H_q(\cdot)$ the $q$th homology functor, for any $n$-filtration  ${\mathcal K}=\{K^u\}_{u\in\Z^n}$ of a  cell complex, we obtain the {\em $n$-parameter} (generally speaking, {\em multi-parameter}) {\em persistence module} ${\mathbb V}=\{{\mathbb V}_u,i_{\mathbb V}^{u,v}\}_{u\preceq v\in\Z^n}$ with ${\mathbb V}_u=H_q(K^u)$ and  $i_{\mathbb V}^{u,v}=i_q^{u,v}\colon  H_q(K^u)\to H_q(K^v)$  induced by the inclusion maps $K^u\hookrightarrow K^v$.
An example of $n$-filtration with $n=2$ together with its persistence module for the homology degree $q=0$  is shown in \cref{fig:multi-filtration}.
\begin{figure}[!h]
	\begin{minipage}[c]{.35\linewidth}
	\centering
		\includegraphics[width=0.8\linewidth]{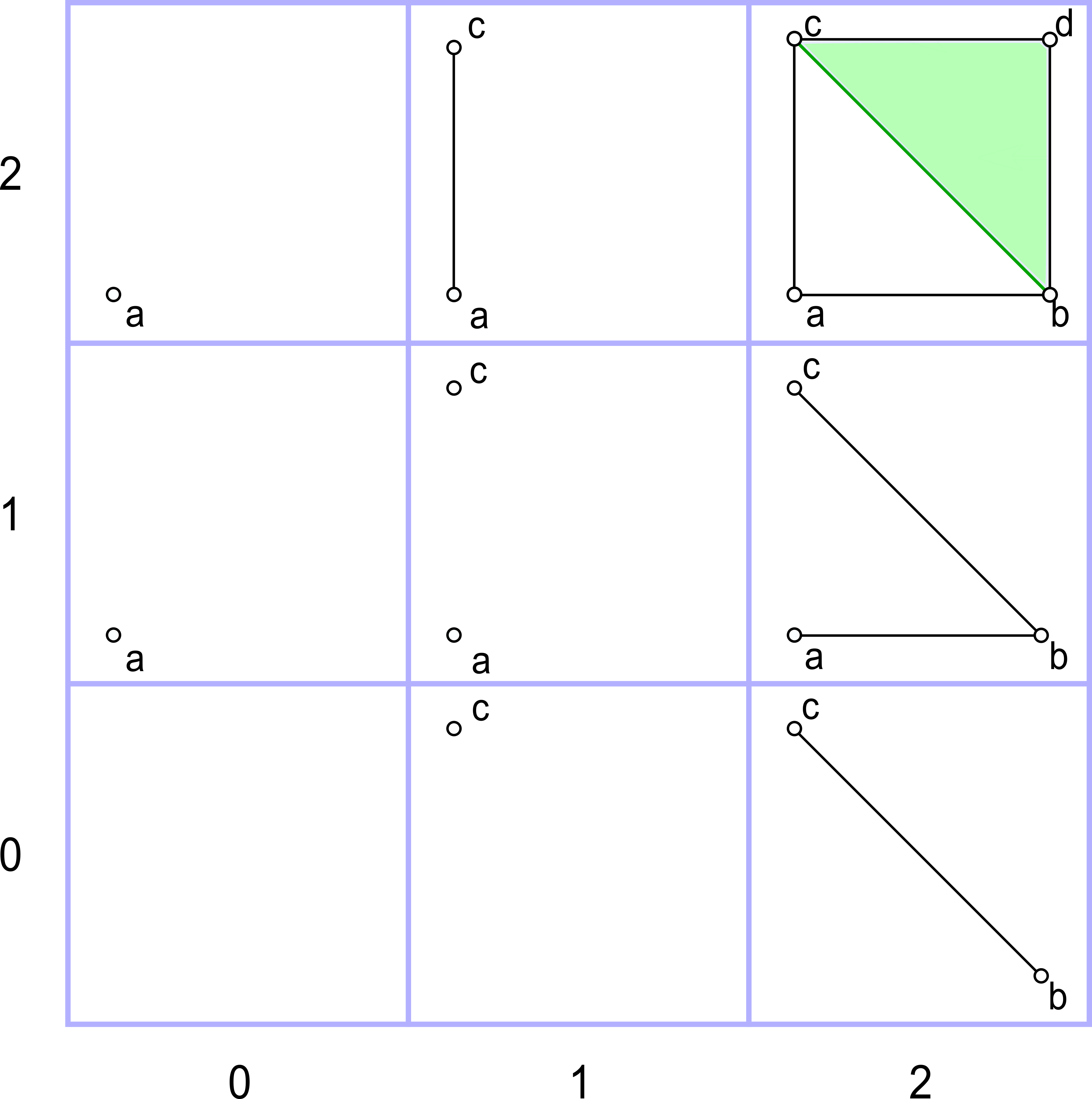}\\
\end{minipage}
\hfill
\begin{minipage}[c]{.35\linewidth}
		$$
		\xymatrix@R=35pt@C=35pt{				
			\F\langle \bar{a} \rangle
			\ar[r]^-{\mbox{\scalebox{.5}{$\begin{pmatrix}
					    1 
  					\end{pmatrix}$}}}  
			& \F\langle \bar{a} \rangle
			\ar[r]^-{\mbox{\scalebox{.5}{$\begin{pmatrix}
					    1 
  					\end{pmatrix}$}}}  
			& \F\langle \bar{a} \rangle
			\\		
			\F\langle \bar{a} \rangle
			\ar[r]^-{\mbox{\scalebox{.5}{$\begin{pmatrix}
					    1  \\
    					0 
  					\end{pmatrix}$}}}  
			\ar[u]^-{\mbox{\scalebox{.5}{$\begin{pmatrix}
					    1 
  					\end{pmatrix}$}}} 
			& \F\langle \bar{a},\bar{c} \rangle
			\ar[r]^-{\mbox{\scalebox{.5}{$\begin{pmatrix}
					    1 & 1
  					\end{pmatrix}$}}}  
			\ar[u]^-{\mbox{\scalebox{.5}{$\begin{pmatrix}
					    1 & 1
  					\end{pmatrix}$}}} 
			& \F\langle \bar{a} \rangle
			\ar[u]^-{\mbox{\scalebox{.5}{$\begin{pmatrix}
					    1 
  					\end{pmatrix}$}}} 
			\\
			0
			\ar[r]^-{\mbox{\scalebox{.5}{$\begin{pmatrix}
					    0
  					\end{pmatrix}$}}}  
			\ar[u]^-{\mbox{\scalebox{.5}{$\begin{pmatrix}
					    0
  					\end{pmatrix}$}}} 
			& \F\langle \bar{c} \rangle
			\ar[r]^-{\mbox{\scalebox{.5}{$\begin{pmatrix}
					    1
  					\end{pmatrix}$}}}  
			\ar[u]^-{\mbox{\scalebox{.5}{$\begin{pmatrix}
					    0 \\
					    1
  					\end{pmatrix}$}}} 
			& \F\langle \bar{b} \rangle
			\ar[u]^-{\mbox{\scalebox{.5}{$\begin{pmatrix}
					    1 
  					\end{pmatrix}$}}} 
		}					
		$$
		
\vspace{.1cm}
\end{minipage}
\hfill
\begin{minipage}[c]{.2\linewidth}
\centering
	\begin{tabular}{c}
\includegraphics[width=0.5\linewidth]{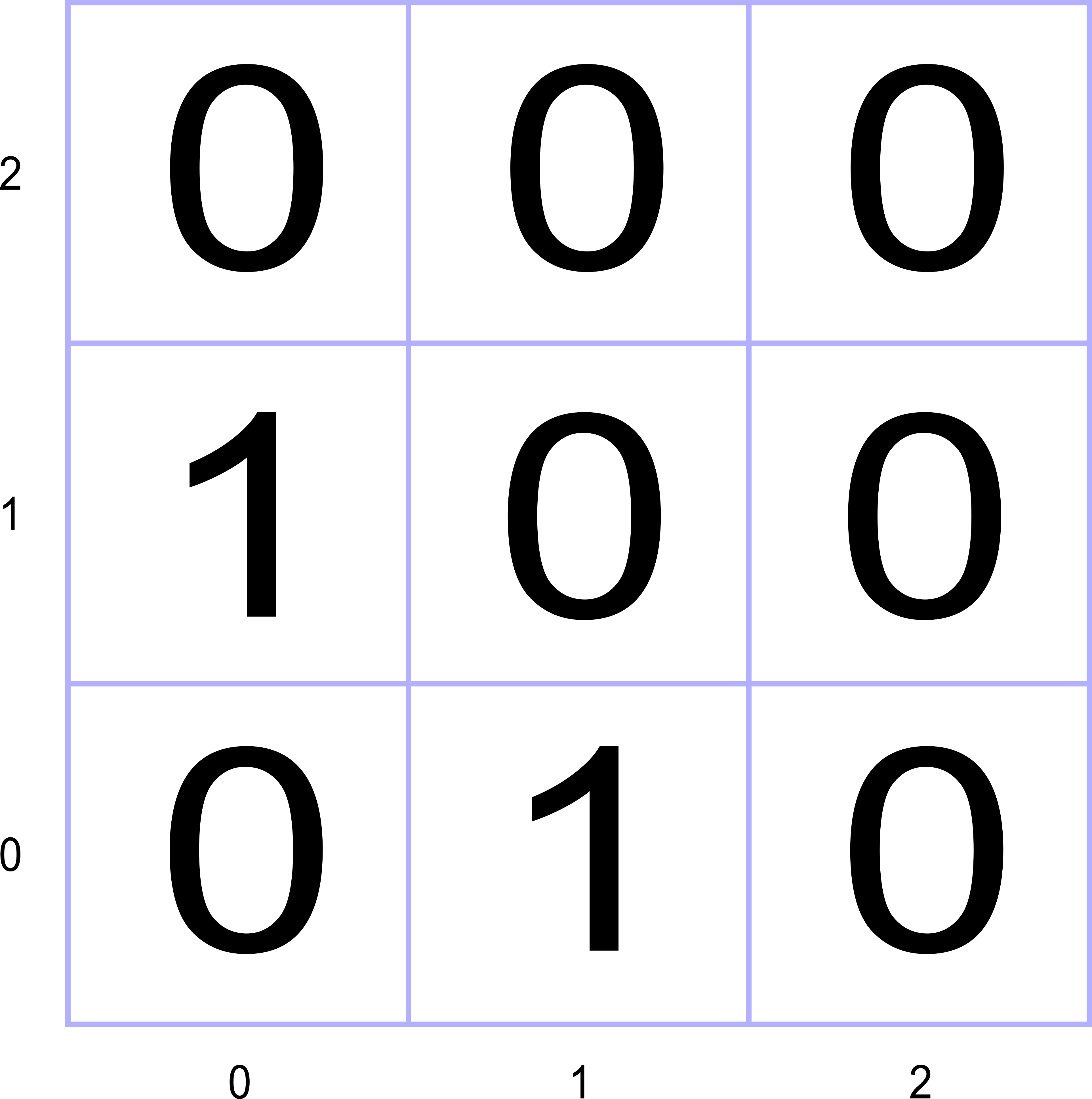} \\
$\xi_0$ \\
\\
\includegraphics[width=0.5\linewidth]{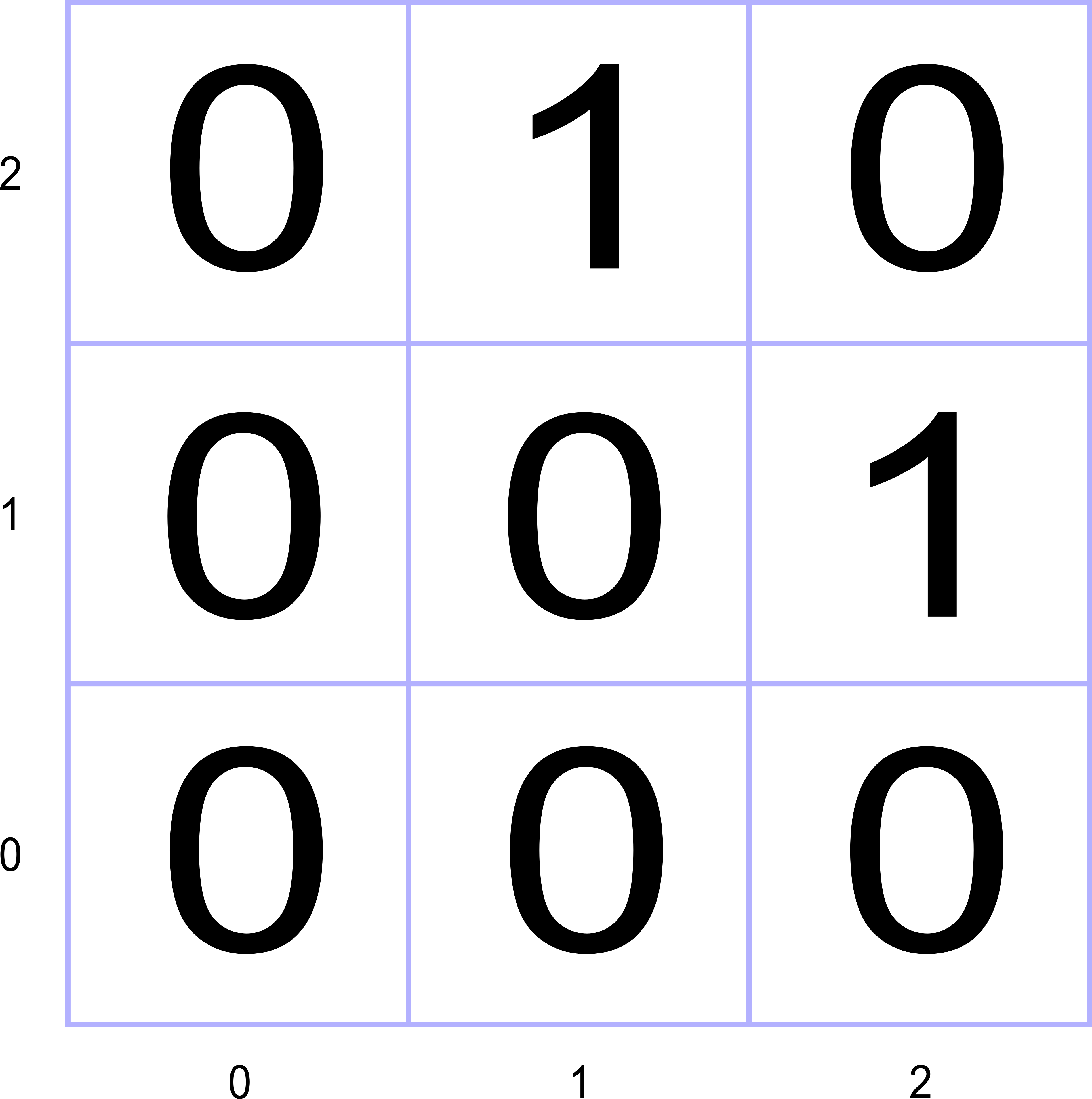} \\
$\xi_1$ \\
\\
\includegraphics[width=0.5\linewidth]{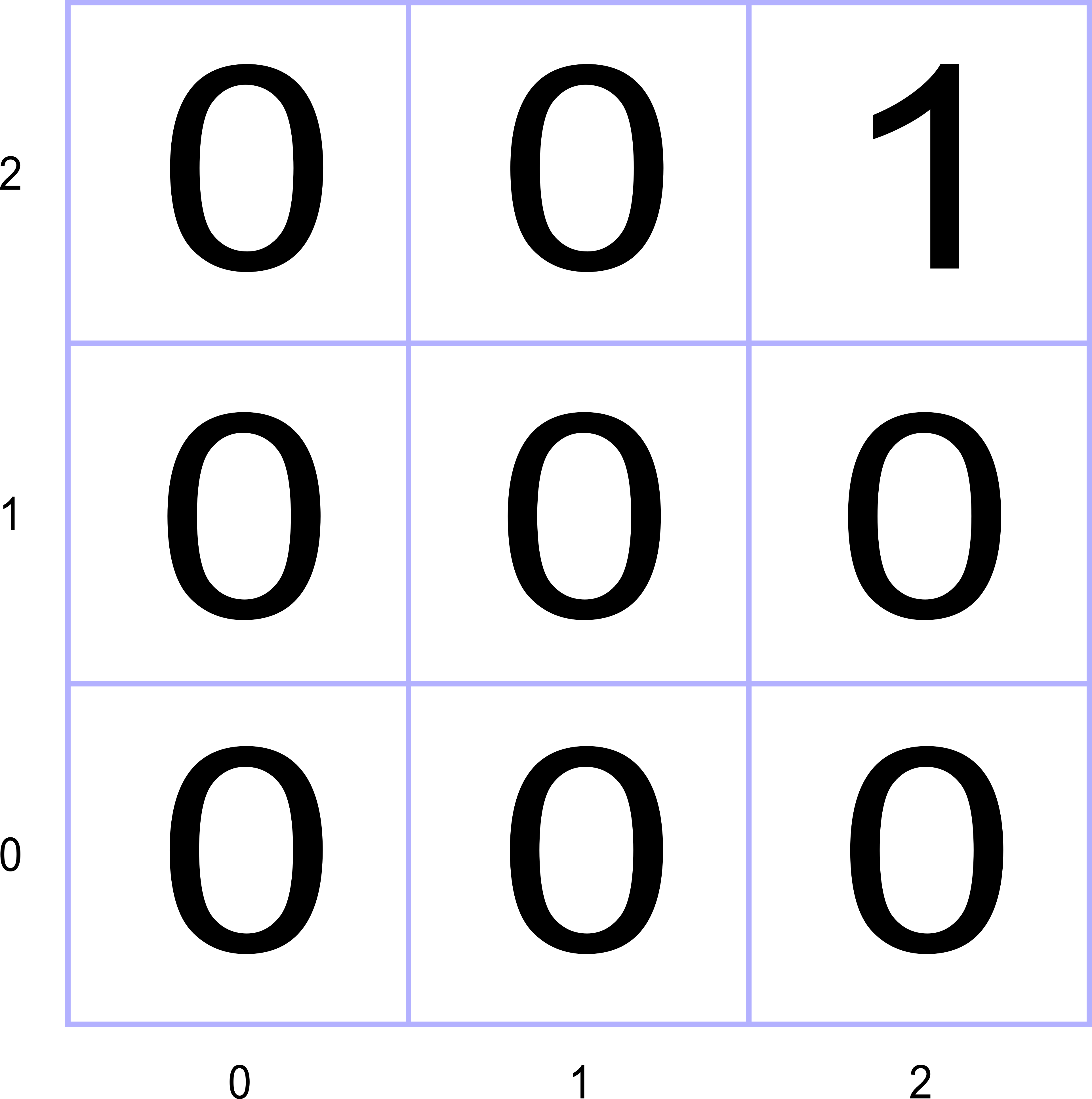}\\
$\xi_2$ \\
\\
	\end{tabular}
\end{minipage}
		\caption{On the left, a 2-filtered simplicial complex $\mathcal{K}$.
		By applying homology in degree $0$ to $\mathcal{K}$, the persistence module $\mathbb{V}$ is shown in the middle.
		For each filtration grade, the corresponding $\F$-vector space is represented by the span of its generators, one for each connected component. The inclusion maps between subsequent cell complexes in the 2-filtered complex correspond to linear maps between vector spaces in the persistence module, which can be represented as matrices with respect to the chosen generators.
		On the right, we show the Betti tables $\xi_i$ with $i=0,1,2$ are depicted. From top to bottom respectively, the non-null values capture the multigrades in the filtration with new born homology classes ($\xi_0$), deaths of homology classes ($\xi_1$), and relations among previously independent deaths of homology classes ($\xi_2$).
\label{fig:multi-filtration}
}
\end{figure}

The rank of linear maps $i_q^{u,v}$  provides  a  continuously parameterized family of Betti numbers $\beta_q(u,v)$, called {\em persistent Betti numbers} \cite{Cerri2013bettiStable} or {\em rank invariant} \cite{Carlsson2009}, giving the number of $q$-holes in $K$ that persist at least from $u$ to $v$ along the filtration. When $n=1$,  we obtain  persistence intervals with endpoints $u<v$. A maximal interval with endpoints $u<v$ signals that at grade $u$ a $q$-cell $\s$,  therefore called a {\em positive cell},  enters into the filtration   creating  a new class in $H_q(K^u)$ that did not exist in $H_q(K^{u-1})$,  while at grade $v$  a $(q+1)$-cell $\t$, therefore called a  {\em  negative cell}, enters into the filtration killing the class created by $\s$.

From a different  perspective, as observed in \cite{Carlsson2009}, the instants when a homology class is created or destroyed along a multi-parameter filtration are captured by Betti tables  of the persistence modules seen as graded modules over a polynomial ring. More precisely, for ${\mathbb V}$ a finitely presented $n$-parameter persistence module,   the $i$th multi-graded {\em Betti table} of ${\mathbb V}$, with $i\ge 0$, is a function $\betti_i^{\mathbb V} \colon   \Z^n \to \N$ defined by 
$$
\betti_i^{\mathbb V}(u):=\dim\tor_i^{P_n}({\mathbb V},\F)(u).
$$
with $P_n$ the polynomial ring $\F[x_1,x_2,\ldots,x_n]$. 
By the Hilbert's Syzygy Theorem,  $\betti_i^{\mathbb V}$ is identically 0 for $i > n$, and so we obtain a finite family of discrete invariants
$\betti_0^{\mathbb V},\betti_1^{\mathbb V},\ldots,\betti_n^{\mathbb V}\colon   \Z^n \to \N$.

In order to go through computations of Betti tables, 
we can consider 
the {\em Koszul complex} \cite{Weibel1995}
whose homologies, in any degree, are the same that define the Betti tables of ${\mathbb V}$.
The relation with the Koszul complex is a convenient one for computing Betti tables. At the same time, the link between the two notions when $\mathbb{V}$ is a persistence module provides a useful way of visualizing algebraic invariants.
Indeed, Betti tables capture relations among homological features:  
we have non-zero values at multigrades where, either new connected components are born ($\betti_0^{\mathbb V}$), or connected components die ($\betti_1^{\mathbb V}$), or previously independent deaths get related ($\betti_2^{\mathbb V}$), and so on (See \cref{fig:multi-filtration}).
In particular:
\begin{itemize}
	\item in the case  $n=1$,  for each grade $u$, the {\em Koszul complex}  is given by the  chain complex
\begin{equation*} 
 \xymatrix@C=1.5cm{0\ar[r]	& {\mathbb V}_{u-1} \ar[r]^-{i_{ {\mathbb V}}^{u-1,u}}	& {\mathbb V}_{u}  \ar[r] &0
}
\end{equation*}
whose homology  gives the following formulas for the Betti tables of ${\mathbb V}$:
\begin{equation}\label{eq:betti-koszul}
 \begin{array}{r@{}l}
 \betti_0^{\mathbb V}(u)	&=	\dim\left( \bigslant{{\mathbb V}_u}{\im(i_{ {\mathbb V}}^{u-1,u})} \right){=\dim\left( \cok(i_{\mathbb V}^{u-1,u}\right) }	\\
 \betti_1^{\mathbb V}(u)	&=	\dim \left( \ker(i_{\mathbb V}^{u-1,u})\right).
\end{array}
\end{equation}

	\item 
In the case  $n=2$,  setting  $x=u-e_1$, $y=u-e_2$, and $z=u-e_1-e_2$ for each multigrade $u$, the {\em Koszul complex}  is given by the chain complex
\begin{equation*} 
 \xymatrix@C=1cm{0\ar[r] &  {\mathbb V}_{z} \ar[r]^-{\spl^u_{\mathbb V}}	&
{\mathbb V}_{x} \oplus {\mathbb V}_{y} \ar[r]^-{\mer^u_{\mathbb V}}	& {\mathbb V}_u \ar[r] & 0,
}
\end{equation*}
with the linear maps $\spl^u_{\mathbb V}$ and $\mer^u_{\mathbb V}$ defined to combine the persistence module linear maps  $i_{ {\mathbb V}}^{z,x}$, $i_{ {\mathbb V}}^{z,y}$, $i_{ {\mathbb V}}^{x,u}$, $i_{ {\mathbb V}}^{y,u}$ according to the matrix expressions $\spl^u_{\mathbb V}=\left[\begin{matrix}i_{ {\mathbb V}}^{z,x} \\ i_{ {\mathbb V}}^{z,y}\end{matrix}\right]$ and $\mer^u_{\mathbb V}=\left[\begin{matrix}i_{\mathbb V}^{x,u} & - i_{\mathbb V}^{y,u}\end{matrix}\right]$.
Intuitively, as by one-criticality $K^z= K^x\cap K^y$ , the map $\spl_{\mathbb V}^u)$ comes from mapping elements of the intersection $K^x\cap K^y$ separately into $K^x$ and $K^y$ (hence, the word split), whereas the map $\mer_{\mathbb V}^u)$ comes from mapping a pair of  elements that live separately in $K^x$ and $K^y$ into $K^u$ that contains both (hence the word merge). The homology of the Koszul complex  gives the following formulas for the Betti tables of ${\mathbb V}$:
\begin{equation}\label{eq:betti-koszul-n=2}
 \begin{array}{r@{}l}
 \betti_0^{\mathbb V}(u)	&=	\dim\left( \bigslant{{\mathbb V}_u}{\im(\mer_{\mathbb V}^u)} \right)
 	 {=\dim\left( \cok(i_{\mathbb V}^{u-1,u}\right) } 	\\
 \betti_1^{\mathbb V}(u)	&=	\dim\left(  \bigslant{\ker(\mer_{\mathbb V}^u)}{\im(\spl_{\mathbb V}^u)} \right)	\\
 \betti_2^{\mathbb V}(u)	&=	\dim\left( \ker(\spl_{\mathbb V}^u \right)
\end{array}
\end{equation}
\end{itemize}

In the rest of the paper, when ${\mathbb V}=\{H_q(K^u),i_q^{u,v}\}$, we will write $\betti_i^q$ in place of  $\betti_i^{\mathbb V}$.

\subsection{Perfectness of  discrete gradient vector fields}
Many of the familiar results from smooth Morse theory \cite{Milnor1963} apply also in the combinatorial setting.
In this section, we restrict ourselves to consider only chain complexes of $K$ over $\F=\Z/2\Z$. 
Following \cite{For98}, a {\em discrete vector}   is a pair of cells $(\s, \t )$ of $K\times K$ with $\s$ a facet of $\t$.
A {\em discrete vector field} $V$  is a set of discrete vectors of $K$  inducing a partition on the cells of $K$ into three disjoint sets $M, S, T$ such that $M$ is the set of unpaired cells, called {\em critical cells}, $S$ is the set of cells paired to a cofacet, $T$ is the set of cells paired to a facet, and there is a bijection between $S$ and $T$.

A {\em $V$-path} connecting two  cells $\s$ and $\s'$ is a sequence  $(\s_0,\t_0,\s_1,\t_1,\ldots, \s_{r-1},\t_{r-1},\s_r)$, with $r\ge 1$ such that
$\s_0=\s$, $\s_r=\s'$, $(\s_i,\t_i)$ is a discrete vector of $V$, and $\s_{i+1}$ is a facet of $\t_i$. 
 If $\s_r=\s_0$,  the $V$-path is said to be {\em closed}, and  if $r=1$, the $V$-path is said to be {\em trivial}.
A discrete vector field $V$ not containing any non-trivial closed $V$-path is called a {\em discrete gradient vector field}.
An example of discrete gradient vector field is shown in \cref{fig:gradient} (left).

\begin{figure}
\begin{center}
\begin{tabular}{ccc} \includegraphics[width=2.7cm]{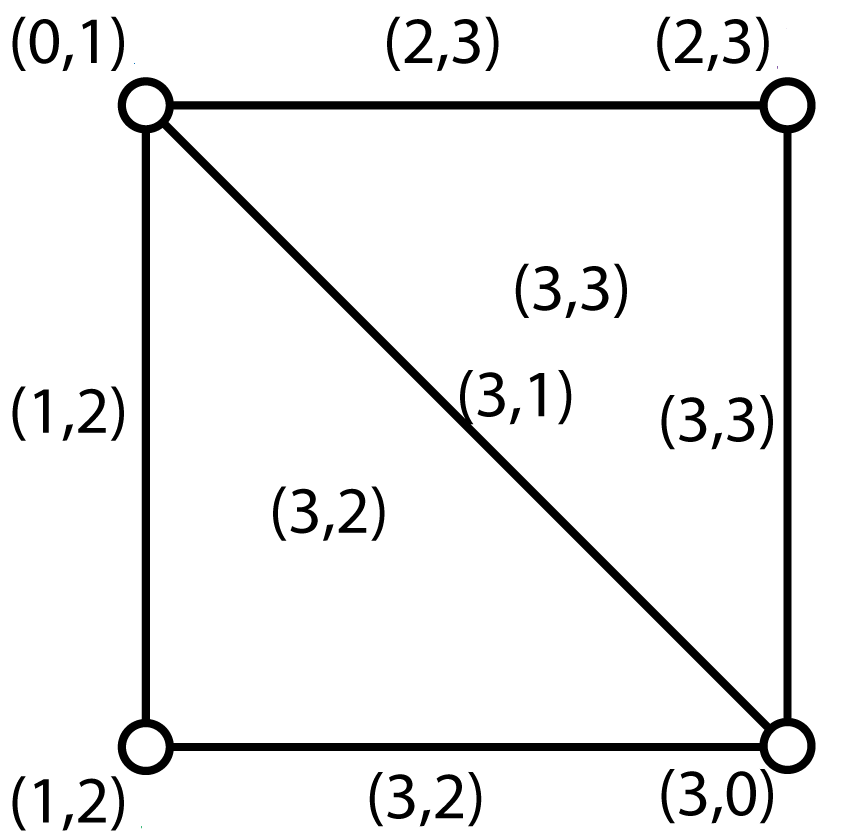} & \includegraphics[width=2.5cm]{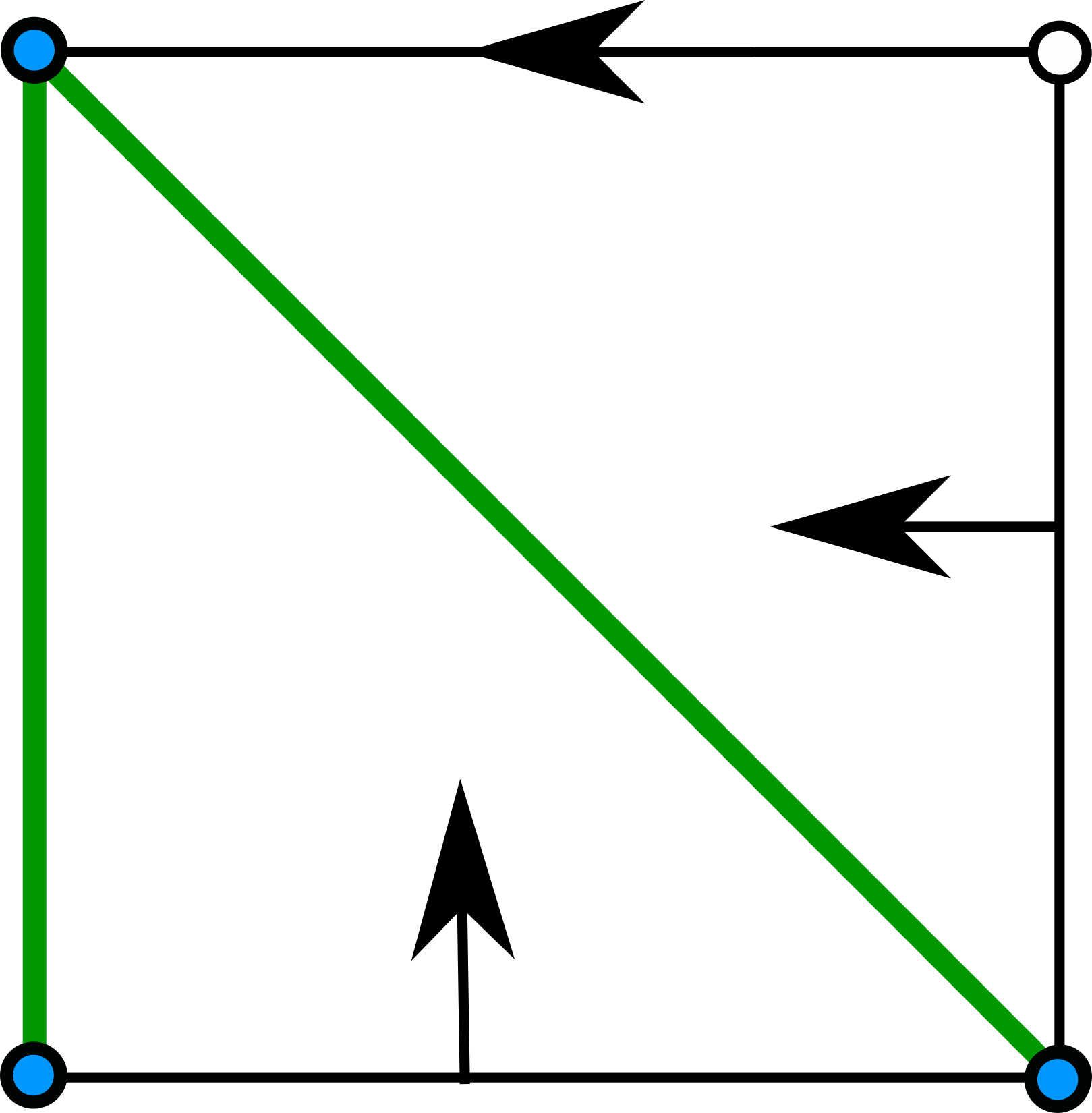} & \includegraphics[width=2.5cm]{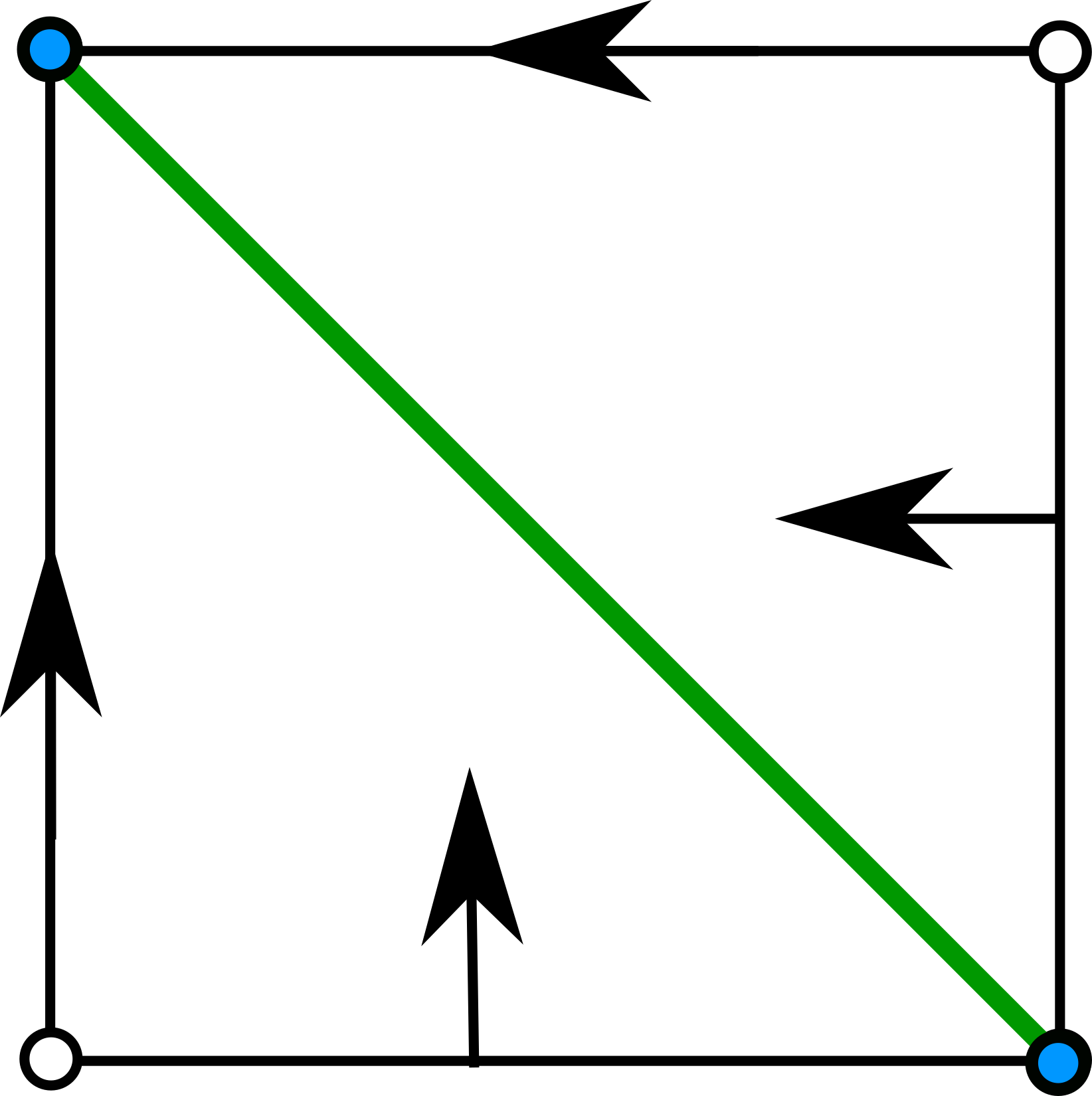} 
\end{tabular}
\end{center}
\caption{Left: a  simplicial complex $K$ filtered by sublevel sets of a function $f$ taking 2 values over each simplex.
Middle: a discrete gradient vector field $V$ on $K$. 
Each arrow is from a $q$-simplex, say $\s$, to a $(q+1)$-simplex, say $\t$, that contains $\sigma$ as a face,  and is used to visualize the discrete vector $(\s , \t)$ in $V$. A critical cell is a simplex from which no arrow starts and no arrow ends. 
Here, $V$ is consistent with $f$ since all discrete vectors involve simplices having the same function value under $f$ but not relative-perfect.
Right: a discrete gradient vector field $V$ on $K$ consistent with $f$ that is also relative-perfect. The two critical vertexes correspond to new born connected component and the critical edge correspond to the death of one connected component.}
\label{fig:gradient}
\end{figure}

For any pair $(\t,\s)\in M\times M$  of critical cells of a discrete gradient vector field $V$, there is a {\em separatrix} from $\t$ to $\s$  if $\t$ is a cofacet of $\s$ or $\t$ has a facet connected to $\s$ through a $V$-path.  The parity of the number of such separatrices defines the value of an  incidence function  $\kappa'\colon  M\times M\to \Z/2\Z$.  The critical set  $M$ together with the incidence function $\kappa'$  form  a cell complex  called the {\em discrete Morse complex} of $V$.
As in smooth Morse theory, the discrete Morse complex $M$ and the original cell complex $K$ have isomorphic homology. Moreover,    the number of $q$-dimensional critical cells of $V$, called the $q$th {\em Morse number} and denoted  by $m_q(V)$, bounds  the $q$th Betti number of $K$, i.e. the following {\em Morse inequalities} hold: for any $q \ge 0$,
\begin{eqnarray}
\label{eq:ineq}
m_q(V) \ge  \beta_q(K):=\dim H_q(K).
\end{eqnarray}

Ideally, we would like  the Morse inequalities to be equalities, but it usually is not so. If that is the case we speak of a {\em perfect gradient vector field}.  Some cell complexes (e.g., the dunce hat and the Bing's house) do not admit a perfect discrete Morse gradient. Some complexes admit a perfect discrete Morse gradient depending on the choice of coefficients. As reviewed in \cite{Mramor2018}, every sphere of dimension $d > 4$ has a triangulation which does not admit a perfect discrete Morse function. On the other hand, it is easy to see that every 1-dimensional cell complex (i.e. graph) has a perfect discrete Morse function, and   every 2-dimensional subcomplex of a 2-manifold has a $\Z_2$-perfect discrete Morse function. 

The rest of the paper will be devoted to study the analogue  of perfectness for a discrete gradient vector field consistent with a multi-filtration. 

\subsection{Consistency of discrete gradient vector fields with multi-filtrations}
We are interested in discrete gradient vector fields consistent with  multi-filtrations as studied in \cite{Allili2017}.

\begin{definition}
A  discrete gradient vector field $V$ on a cell complex $K$ is  {\em consistent} with a multi-filtration ${\mathcal K}=\{K^u\}_{u\in \Z^n}$ of $K$ if for all $(\s,\t)$ in $V$,   $\s\in K^u$ if and only if $\t\in K^u$.
\end{definition}

As an example, the discrete gradient vector field on the left of \cref{fig:gradient} is consistent with the sublevel set filtration induced by the function illustrated on the right.

Consistency  of $V$ with a multi-filtration is interesting because it ensures that  persistence modules are preserved.  Indeed, if $V$ is a discrete gradient vector field on a cell complex $K$ consistent with the  multi-filtration ${\mathcal K}=\{K^u\}_{u\in \Z^n}$, and   $M$ is the discrete Morse complex of   $V$, letting ${\mathcal M}=\{M^u\}_{u\in \Z^n}$ be the multi-filtration inherited from $\mathcal K$,    the restriction of the  incidence function of $M$  to $M^u\times M^u$ yields a cell complex for every filtration grade  $u\in\Z^n$.  Moreover, for every $q\ge 0$ and every $u\in\Z^n$, there is an isomorphism $\pi_q^u\colon   H_q(K^u)\to H_q(M^u)$ such that the diagram
\begin{eqnarray}
\xymatrix{
H_q(K^u)\ar[r]^-{i_q^{u,v}} \ar[d]^-{\pi_q^u}& H_q(K^v)\ar[d]^-{\pi_q^v} \\
H_q(M^u) \ar[r]^-{i_q^{u,v}} & H_q(M^v)
}\label{eq:chainhomotopy}
\end{eqnarray}
commutes for every $u\preceq v\in\Z^n$.

\subsection{Retrieval of consistent discrete gradient fields}
\label{sec:retrieval}

The retrieval of  discrete gradient vector fields consistent with suitable $n$-filtrations is guaranteed by algorithms such as \texttt{ProcessLowerStars} \cite{RobWooShe11} when $n=1$ , and \texttt{Matching}  \cite{Allili2019}, or equivalently \texttt{ComputeDiscreteGradient} \cite{Iuricich2016}, when $n\ge 1$. 

In order to apply such algorithms, the multi-filtration needs to be constructed as follows. 
Assuming $K$ to be a simplicial complex, first a  function $f_0\colon  K_0\to \Z^n$  is given on the vertices of $K$ with the property of being  component-wise injective. Next,  $f_0$ is extended to the whole $K$ by setting $f=(f_i)\colon  K\to \Z^n$, $f_i(\tau)=\max\{f_i(\sigma): \mbox{$\sigma$ is a facet of $\tau$}\} $. Finally,  the multi-filtration ${\mathcal K}=\{K^u\}_{u\in\Z^n}$ is defined by sublevel sets $K^u=\{\sigma\in K: f(\sigma)\preceq u\}$. 

The requirement for $f_0$ to have injective components is not very restrictive as it can be achieved by arbitrarily small perturbations. The extension of the values of the function to other simplices using the $\max$ is quite natural in view of the results of  \cite{CaEt2013} showing that this reflects multi-parameter interpolation from the vertices  in the discrete  case.  Moreover, multi-filtration is one-critical.

All the above-mentioned algorithms are based on a common subroutine acting locally on lower stars.
We call this subroutine \texttt{HomotopyExpansion} and we report its pseudocode in \cref{app:alg}.
For every cell  $\sigma$  in $K$, its {\em lower star} is defined as the set of all the cofaces of  $\sigma$ in $K$ on which the function $f$ takes a value smaller or equal than that on  $\sigma$  itself:
 $L_f(\s)=\{ \t\in K: \mbox{$\s$ is a face of $\t$ s.t. $f(\t)\preceq f(\s)$}\}$. 

While for $n=1$ it is sufficient to run  \texttt{HomotopyExpansion} on lower stars of each vertex, for $n>1$, it needs to be run on lower stars of minimal simplices of any dimension contained in level sets of $f$, with minimality taken  with respect to the facet relation.

\begin{figure}
	\centering
	\begin{tabular}{c c c}
\includegraphics[width=0.25\linewidth]{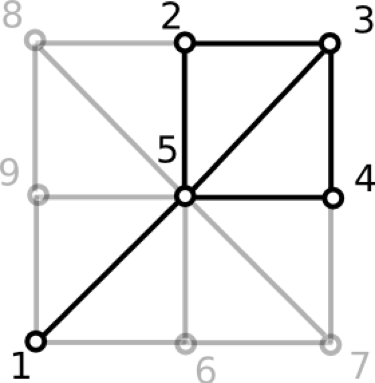} 
&
\includegraphics[width=0.25\linewidth]{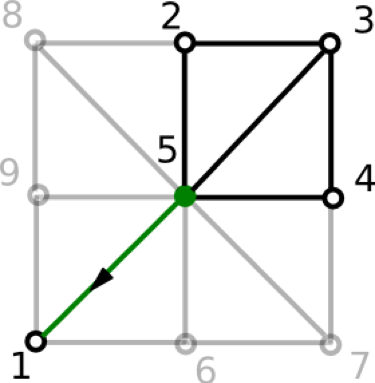}
&
\includegraphics[width=0.25\linewidth]{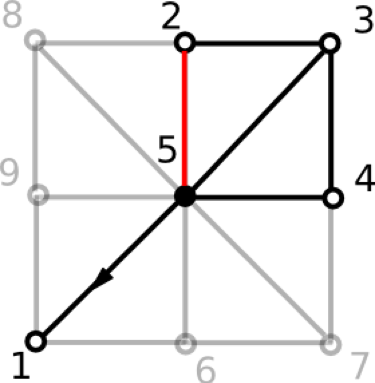}\\
		(a) & (b) & (c)\\
\includegraphics[width=0.25\linewidth]{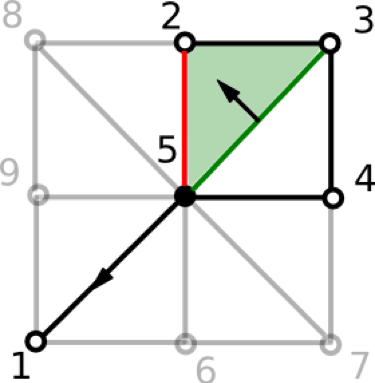} 
&
\includegraphics[width=0.25\linewidth]{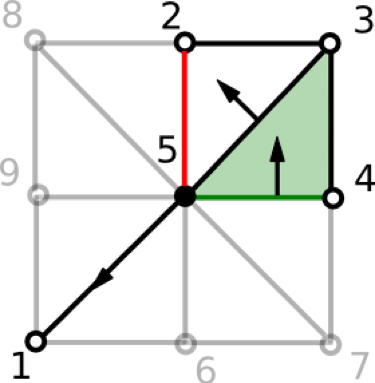}
&
\includegraphics[width=0.25\linewidth]{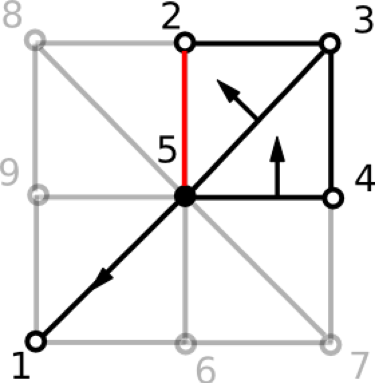}\\
		(d) & (e) & (f)\\
	\end{tabular}
	\caption{Working example for the subroutine \texttt{HomotopyExpansion}. Each image in the figure represents an operation performed at a specific line with respect to the code in \cref{app:alg}: (a)  the input lower star of vertex 5; (b)  Vertex 5 is paired to edge [1,5] at line 8; (c) Edge [2,5] is found critical at line 24; (d) edge [3,5] is paired to triangle [2,3,5] at line 17; (e) edge [4,5] is paired to triangle [3,4,5] at line 17; (f) the  discrete gradient vector field retrieved by \texttt{HomotopyExpansion}.}
	\label{fig:HomotopyExpansion}
\end{figure}

In the following sections, after extending the concept of perfectness to discrete gradient fields consistent with multi-filtration,  we will prove that  the discrete gradient fields retrieved by such algorithms are relative-perfect, at least when $\dim K\le 2$.

\section{Relative-perfect discrete gradient vector fields}
\label{sec:multi-perfect}

In this section, we introduce a  notion of  perfectness of  gradient vector fields  for (multi-parameter) persistent homology (\cref{def:perfect}) as a generalization of the notion for the case of standard homology. 
In order to support our approach, \cref{prop:crit} relates relative homology with the number of critical cells.
Moreover, in \cref{prop:ker-coker} and \cref{prop:posneg} we show the meaning of relative-perfectness in the case of 1-parameter persistent homology, and the differences between the 1- and the multi-parameter cases.

We start with an analogue for the usual Morse inequalities (\ref{eq:ineq}) in the persistence setting.
We assume  $V$ to be a discrete gradient vector field consistent with a  multi-filtration ${\mathcal K}=\{K^u\}_{u\in Z^n}$ of a cell complex $K$, and $M$ the discrete Morse complex of $V$. Recall that we always assume multi-filtrations to be one-critical. We first introduce the discrete Morse numbers for $V$. 

\begin{definition}\label{def:morse_numbers}
For any $u\in\Z^n$ and $q\in\Z$, we set $m_q(u)$  to be the number of critical $q$-cells of $V$ contained in  $M^u-\bigcup_{i=1}^n M^{u-e_i}$, and call it the $q$th {\em (multi-parameter) Morse number} of $V$.
\end{definition}

Recall that we introduced in \cref{sec:multifiltration} the notation $e_i$  to indicate the $i$-th element of the standard basis of $\Z^n$.
Because  $\bigcup_{i=1}^n M^{u-e_i}$ is a subcomplex of $M^u$, we can consider the homology of the relative pair $(M^u,\bigcup_{i=1}^n M^{u-e_i})$, and analogously for $K$. They are related as follows.

\begin{lemma}\label{lem:mv}
Let $S\subseteq S'$ be  non-empty subsets of   $Q=\{0,1\}^n$. For  each filtration grade $u\in\Z^n$, and each homology degree $q\in\Z$,  there are  isomorphisms
$\p_q^{S}\colon   H_q \left (\bigcup_{s\in S}K ^{u-s}\right)\to H_q \left ( \bigcup_{s\in S} M ^{u-s}\right)$ and $\p_q^{S'}\colon   H_q \left (\bigcup_{s\in S}K ^{u-s}\right)\to H_q \left ( \bigcup_{s\in S} M ^{u-s}\right)$ that make the diagram
$$
\xymatrix{
H_q( \bigcup_{s\in S}K ^{u-s})\ar[r] \ar[d]^-{\p_q^S}& H_q( \bigcup_{s\in S'}K ^{u-s})\ar[d]^-{\p_q^{S'}} \\
 H_q(\bigcup_{s\in S} M ^{u-s})\ar[r] & H_q( \bigcup_{s\in S'}M ^{u-s}),
}
$$
whose  horizontal maps are induced by inclusions,  commute.
\end{lemma}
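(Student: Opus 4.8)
The plan is to realize each union $\bigcup_{s\in S}K^{u-s}$ as a subcomplex of $K$ on which $V$ restricts to a discrete gradient vector field, and then to apply the discrete Morse isomorphism to the restriction, exactly as diagram~(\ref{eq:chainhomotopy}) does for a single sublevel set. Set $A=\bigcup_{s\in S}K^{u-s}$ and $A'=\bigcup_{s\in S'}K^{u-s}$; since a finite union of subcomplexes is a subcomplex and $S\subseteq S'$, both $A$ and $A'$ are subcomplexes of $K$ with $A\subseteq A'$. First I would verify that $V$ is $A$-invariant in the sense that for every discrete vector $(\s,\t)\in V$ one has $\s\in A$ if and only if $\t\in A$. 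This is immediate from consistency, which gives $\s\in K^{u-s}\iff\t\in K^{u-s}$ for each $s\in S$, together with the definition of $A$ as a union: if $\s\in K^{u-s_0}$ then $\t\in K^{u-s_0}\subseteq A$, and symmetrically.

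Next I would identify the discrete Morse complex of the restricted field $V|_A$. Its critical cells are precisely $M\cap A=\bigcup_{s\in S}M^{u-s}$. The delicate point is that the incidence function $\kappa'$ computed inside $A$ must agree with the one computed in $K$: for critical cells $\t,\s\in A$, a separatrix from $\t$ to $\s$ either uses a cofacet relation (harmless) or a $V$-path emanating from a facet $\tilde\s$ of $\t$. Since $A$ is closed under faces, $\tilde\s\in A$; and since $A$ is both $V$-invariant and closed under faces, an induction along the $V$-path $(\tilde\s,\t_0,\s_1,\dots)$ shows it never leaves $A$ (each $\t_i\in A$ by $V$-invariance, each $\s_{i+1}\in A$ by face-closure). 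Hence separatrices between cells of $A$ stay in $A$, the parities agree, and the Morse complex of $V|_A$ is exactly $\bigcup_{s\in S}M^{u-s}$ with the restricted incidence function, and likewise for $A'$. The discrete Morse isomorphism then yields $\p_q^S$ and $\p_q^{S'}$.

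Finally, for the commutativity of the square I would argue naturality of the discrete Morse reduction with respect to the flow-invariant inclusion $A\hookrightarrow A'$. The chain maps inducing $\p_q^S$ and $\p_q^{S'}$ are built cellwise from the gradient flow of $V$; because $A$ is $V$-invariant and closed under faces, this flow preserves $C_*(A)\subseteq C_*(A')$, so the reduction maps for $A$ are the restrictions of those for $A'$. This produces a commuting square of chain maps relating $C_*(A)\to C_*(A')$ to $C_*(\bigcup_{s\in S}M^{u-s})\to C_*(\bigcup_{s\in S'}M^{u-s})$, which on homology is the asserted diagram. I expect this last step to be the main obstacle, since it needs the explicit, flow-based description of the discrete Morse isomorphism (not merely its existence) in order to read off functoriality with respect to $V$-invariant subcomplexes. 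An alternative route that avoids reopening that construction is to observe that all intersections $\bigcap_{s\in T}K^{u-s}=K^{u-\bigvee_{s\in T}s}$, for $T\subseteq S$, are again sublevel sets, and then to feed the already-established naturality of diagram~(\ref{eq:chainhomotopy}) into the Mayer--Vietoris spectral sequences of the covers $\{K^{u-s}\}_{s\in S}$ and $\{M^{u-s}\}_{s\in S}$, obtaining $\p_q^S$ together with its naturality as the induced isomorphism of spectral sequences.
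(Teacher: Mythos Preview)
Your proposal is correct and takes a genuinely different route from the paper's. You prove the isomorphism directly, by recognizing that $A=\bigcup_{s\in S}K^{u-s}$ is a $V$-invariant subcomplex whose restricted Morse complex is precisely $\bigcup_{s\in S}M^{u-s}$; the paper instead defines $\p_q^S$ as the restriction of the map $\pi_q^u$ from diagram~(\ref{eq:chainhomotopy}) and then proves it is an isomorphism by induction on $S$, using the Mayer--Vietoris sequence of the triple $(\bigcup_{s\in S'}K^{u-s},\bigcup_{s\in S}K^{u-s},K^{u-s'})$ together with the Five Lemma, after observing via one-criticality that the intersection $\bigcup_{s\in S}K^{u-s}\cap K^{u-s'}$ is again a union indexed by a smaller subset of $Q$. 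The trade-off is exactly the one you anticipate: your direct argument makes the isomorphism immediate but forces you to reopen the flow-based construction of the Morse reduction to obtain naturality, whereas the paper's inductive argument takes the commuting square of~(\ref{eq:chainhomotopy}) as a black box (commutativity of all the squares is asserted from ``restriction'' at the outset) and bootstraps the isomorphism from the singleton case. Your alternative route via the Mayer--Vietoris spectral sequence is a higher-tech version of what the paper actually does; the paper stays with the ordinary Mayer--Vietoris long exact sequence and a finite induction, which suffices since $Q$ is finite.
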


\begin{proof}
With each non-empty subset $S$ of $Q$ we associate the subcomplex 
$\bigcup_{s\in S}K ^{u-s}$ of $K^u$. With $S=\emptyset$, we associate $K^{u-\sum_{i=1}^n e_i}$.  For $S\subseteq S'\subseteq Q$, we have $\bigcup_{s\in S}K ^{u-s}\subseteq \bigcup_{s\in S'}K ^{u-s}$. The inclusion of subsets of $Q$ is a  well-founded  partial order relation.

For each $S\subseteq Q$, we take the map $\p_q^{S}$ to be  the restrictions of the map $\pi_q^u$ of diagram (\ref{eq:chainhomotopy}) to $\bigcup_{s\in S}K ^{u-s}$, so the considered diagrams commute. We now prove that the maps $\p_q^{S}$ are isomorphisms. 
 We   prove the claim  by well-founded induction on the relation $\le$. If $S$ is the empty subset, the diagram in the claim  coincides with that of  (\ref{eq:chainhomotopy}) and so the claim is true. Let $S'$ be a subset of $Q$ and let us assume  the claim is true for every   $S\subseteq  S'$. By the inductive step the maps $\p_q^{S}$ and $\p_q^{\{s'\}}$ are isomorphisms so that $\psi_q^{S'}=\p_q^{S}\oplus \p_q^{\{s'\}}\colon  H_q(\bigcup_{s\in S} K^{u-s})\oplus H_q(K^{u-s'})\to  H_q(\bigcup_{s\in S} M^{u-s})\oplus H_q(M^{u-s'})$ satisfy the claimed property.  Moreover,  because the multi-filtration is one-critical, denoting by $\mathrm{l.u.b.}(s,s')$ the least upper bound of $s$ and $s'$  in   $Q\subseteq \Z^n$, and letting $T=\{t\in \{0,1\}^n: t=\mathrm{l.u.b.}(s,s'), s\in S\}$, we have 
 $\bigcup_{s\in S}K ^{u-s}\cap K^{u-s'}=\bigcup_{t\in T }K^{u-t}$. Analogously, $\bigcup_{s\in S}M^{u-s}\cap M^{u-s'}=\bigcup_{t\in T } M^{u-t}$.
 Because $T\subseteq S'$, by the inductive step we deduce that   $\p_q^T\colon   H_q(\bigcup_{s\in S}K ^{u-s}\cap K^{u-s'})\to H_q(\bigcup_{s\in S}M^{u-s}\cap M^{u-s'})$  satisfies the claimed property.
 
  We now take the  triples 
$(\bigcup_{s\in S'} K^{u-s}, \bigcup_{s\in S} K^{u-s},K^{u-s'})$ and $(\bigcup_{s\in S'} M^{u-s}, \bigcup_{s\in S} M^{u-s},M^{u-s'})$ with $S\subseteq S'\subseteq Q$.
As we have seen, their   Mayer-Vietoris exact sequences are connected by maps that make the following diagram commute
$$                    
\xymatrix{
\cdots  H_q(\bigcup_{s\in S} K^{u-s})\oplus H_q(K^{u-s'}) \ar[r]\ar[d]^{\psi_q^S} &H_q(\bigcup_{s\in S'} K^{u-s})  \ar[r]\ar[d]^{\p_q^{S'}} &H_{q-1}(\bigcup_{s\in S}  K^{u-s}\cap K^{u-s'}) 	\ar[d]^{\p_{q-1}^T} \cdots  \\	
\cdots H_q(\bigcup_{s\in S} M^{u-s})\oplus H_q(M^{u-s'}) \ar[r]&H_q(\bigcup_{s\in S'} M^{u-s})  \ar[r] &H_{q-1}(\bigcup_{{s\in S}}  M^{u-e_s}\cap M^{u-e_s'}) 	\cdots
}
$$ 
with $\psi_q^S$ and $\p_{q-1}^T$ isomorphisms. By the Five Lemma, we deduce that also $\p_q^{S'}$ is an isomorphism, proving the claim. 
\end{proof}

\begin{lemma}
For  each filtration grade $u\in\Z^n$, and each homology degree $q\in\Z$, 
$$ H_q \left (K^u, \bigcup_{i=1}^nK ^{u-e_i}\right)\cong H_q \left (M^u, \bigcup_{i=1}^n M ^{u-e_i}\right).$$
\label{lem:relative}
\end{lemma}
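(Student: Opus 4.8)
The plan is to reduce \cref{lem:relative} to \cref{lem:mv} by means of the long exact sequence of a pair and the Five Lemma. The first step is to recognize the two unions in the statement as the unions already handled in \cref{lem:mv}. Taking $S'=Q=\{0,1\}^n$, and noting that $0\in Q$ contributes the term $K^{u}$ itself while every other term is a subcomplex of it, we get $\bigcup_{s\in S'}K^{u-s}=K^u$ and likewise $\bigcup_{s\in S'}M^{u-s}=M^u$. Taking $S=\{e_1,\dots,e_n\}\subseteq Q$ we get $\bigcup_{s\in S}K^{u-s}=\bigcup_{i=1}^n K^{u-e_i}$ and $\bigcup_{s\in S}M^{u-s}=\bigcup_{i=1}^n M^{u-e_i}$. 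Since $S\subseteq S'$ and both are non-empty (here $n\ge 1$), \cref{lem:mv} applies to this pair of subsets.

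Write $A=\bigcup_{i=1}^n K^{u-e_i}$ and $B=\bigcup_{i=1}^n M^{u-e_i}$. Invoking \cref{lem:mv} gives, for each $q$, a commuting square whose vertical maps $\p_q^{S}\colon H_q(A)\to H_q(B)$ and $\p_q^{S'}=\pi_q^u\colon H_q(K^u)\to H_q(M^u)$ are isomorphisms and whose horizontal maps are induced by the inclusions $A\hookrightarrow K^u$ and $B\hookrightarrow M^u$. These isomorphisms are realized by the restriction of a single filtration-preserving chain homotopy equivalence between $C_*(K)$ and the Morse complex $C_*(M)$ underlying the isomorphism of diagram $(\ref{eq:chainhomotopy})$; being filtered (by consistency of $V$), it restricts to $C_*(A)\to C_*(B)$ and hence descends to the relative chains, yielding a genuine morphism of pairs $(K^u,A)\to(M^u,B)$ and, by naturality of the connecting homomorphism, a morphism between the corresponding long exact sequences.

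Then I would display the resulting ladder
$$
\xymatrix@C=.55cm{
H_q(A)\ar[r]\ar[d]^-{\cong} & H_q(K^u)\ar[r]\ar[d]^-{\cong} & H_q(K^u,A)\ar[r]\ar[d] & H_{q-1}(A)\ar[r]\ar[d]^-{\cong} & H_{q-1}(K^u)\ar[d]^-{\cong}\\
H_q(B)\ar[r] & H_q(M^u)\ar[r] & H_q(M^u,B)\ar[r] & H_{q-1}(B)\ar[r] & H_{q-1}(M^u)
}
$$
in which the rows are the exact sequences of the pairs and the four outer vertical maps are the isomorphisms supplied by \cref{lem:mv}. The Five Lemma then forces the central vertical map $H_q(K^u,A)\to H_q(M^u,B)$ to be an isomorphism, which is exactly the claimed $H_q(K^u,\bigcup_{i=1}^n K^{u-e_i})\cong H_q(M^u,\bigcup_{i=1}^n M^{u-e_i})$.

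The step I expect to require the most care is the middle one, namely producing an \emph{honest} morphism of pairs so that the ladder genuinely commutes, rather than merely exhibiting two abstractly isomorphic long exact sequences (the latter would not suffice for the Five Lemma). This is where consistency of $V$ with the multi-filtration is indispensable: it guarantees that the chain-level equivalence between $K$ and $M$ respects sublevel sets, and therefore restricts simultaneously to the subcomplexes $A$, $B$ and to $K^u$, $M^u$. Commutativity of the two outer squares is then inherited directly from \cref{lem:mv}, and commutativity of the squares involving the boundary maps follows from naturality of the connecting homomorphism with respect to this morphism of pairs.
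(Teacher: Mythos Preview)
Your proof is correct and follows essentially the same route as the paper: invoke \cref{lem:mv} to obtain isomorphisms on the absolute homology groups, set up the ladder of long exact sequences of the pairs, and apply the Five Lemma. Your choice of $S'=Q$ to realize $K^u$ as $\bigcup_{s\in S'}K^{u-s}$ is a neat packaging (the paper instead cites diagram~(\ref{eq:chainhomotopy}) directly for the map at $H_q(K^u)$), and your explicit discussion of why the ladder commutes---in particular at the connecting maps, via the underlying filtered chain equivalence---fills in a point the paper leaves implicit.
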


\begin{proof}
\cref{lem:mv} implies that the  map $\p_q^S\colon  H_q(\bigcup_{s\in S}K^{u-s}) \to H_q(\bigcup_{s\in S}^{u-s})$, with $S=\{e_1,e_2,\ldots, e_n\}$, is an isomorphisms for any $q\in \Z$. Thus,  we are in the position of applying  the Five Lemma to the following long exact sequence of pairs:
$$                    
\xymatrix{
 H_q(\bigcup_{i=1}^n K^{u-e_i}) \ar[r]\ar[d]^{\cong} &H_q(K^u)  \ar[r]\ar[d]^\cong &H_q(K^u, \bigcup_{i=1}^n K^{u-e_i}) 	\ar[r]\ar[d] & H_{q-1}(\bigcup_{i=1}^n K^{u-e_i}) \ar[r]\ar[d]^\cong &H_{q-1}(K^u)  \ar[d]^\cong \\	
H_q(\bigcup_{i=1}^nM^{u-e_i}) \ar[r] &H_q(M^u)  \ar[r] &H_q(M^u, \bigcup_{i=1}^nM^{u-e_i}) 	\ar[r] & H_{q-1}(\bigcup_{i=1}^nM^{u-e_i}) \ar[r] &H_{q-1}(M^u) .
}
$$ 
Hence $\dim H_q(K^u, \bigcup_{i=1}^n K^{u-e_i})=\dim H_q(M^u, \bigcup_{i=1}^n M^{u-e_i})$, proving the claim.
\end{proof}

\begin{proposition}\label{prop:crit}
For any homology degree $q\in\Z$, and any filtration grade $u\in\Z^n$, it holds that
$$m_q(u)\ge \dim H_q(K^u,\bigcup_{i=1}^n K^{u-e_i}).$$
Moreover, in order to have  $m_q(u)=\dim H_q(M^u, \bigcup_{i=1}^n M ^{u-e_i})$, it is sufficient that the relative boundary map \\   $\partial^{rel}_q\colon  C_q(M^u,  \bigcup_{i=1}^n M ^{u-e_i})\to  C_{q-1}(M^u,  \bigcup_{i=1}^n M ^{u-e_i})$ is trivial  for all integers $q$.
\end{proposition}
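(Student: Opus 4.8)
The plan is to reduce both claims to the elementary observation that, for the Morse complex, the relative chain group in degree $q$ is freely generated by exactly the cells counted by $m_q(u)$, after which the inequality becomes the standard fact that homology is a subquotient of chains.

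First I would identify the relative chains. By definition the relative chain group $C_q(M^u, \bigcup_{i=1}^n M^{u-e_i})$ is the quotient $C_q(M^u)/C_q(\bigcup_{i=1}^n M^{u-e_i})$, which is well-defined because $\bigcup_{i=1}^n M^{u-e_i}$ is a subcomplex of $M^u$ (as already noted before the statement). Since the cells of the Morse complex $M$ are precisely the critical cells of $V$, this quotient is freely generated over $\F$ by those critical $q$-cells lying in $M^u$ but not in $\bigcup_{i=1}^n M^{u-e_i}$. By \cref{def:morse_numbers} the number of such cells is exactly $m_q(u)$, so $\dim C_q(M^u, \bigcup_{i=1}^n M^{u-e_i}) = m_q(u)$.

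Next I would invoke that the relative homology $H_q(M^u, \bigcup_{i=1}^n M^{u-e_i}) = \ker(\partial^{rel}_q)/\im(\partial^{rel}_{q+1})$ is a subquotient of the relative chain group, so that $\dim H_q(M^u, \bigcup_{i=1}^n M^{u-e_i}) \le m_q(u)$. Feeding in the isomorphism $H_q(K^u, \bigcup_{i=1}^n K^{u-e_i}) \cong H_q(M^u, \bigcup_{i=1}^n M^{u-e_i})$ from \cref{lem:relative} then yields the asserted inequality $m_q(u) \ge \dim H_q(K^u, \bigcup_{i=1}^n K^{u-e_i})$.

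For the sufficient condition, I would simply note that if $\partial^{rel}_q = 0$ for every $q$, then $\ker(\partial^{rel}_q)$ is the whole relative chain group while $\im(\partial^{rel}_{q+1}) = 0$, so the relative homology coincides with the relative chains; hence $\dim H_q(M^u, \bigcup_{i=1}^n M^{u-e_i}) = \dim C_q(M^u, \bigcup_{i=1}^n M^{u-e_i}) = m_q(u)$. There is no genuine obstacle in this argument; the only point requiring a moment of care is the free-generation claim for the relative chains, which rests on the fact that the Morse complex has the critical cells as its generators and that the sublevel union is a subcomplex, both already available in the text.
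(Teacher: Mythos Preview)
Your proposal is correct and follows essentially the same approach as the paper: both identify $m_q(u)$ with $\dim C_q(M^u,\bigcup_{i=1}^n M^{u-e_i})$, bound relative homology by relative chains, and then invoke \cref{lem:relative} to pass from $M$ to $K$; the treatment of the sufficient condition is likewise identical.
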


\begin{proof}
By definition, $m_q(u)$ is equal to the number of $q$-dimensional critical cells of $V$ in $M^u-\bigcup_{i=1}^n M^{u-e_i}$. Therefore, $m_q(z)= \dim C_q(M^u)-\dim C_q(\bigcup_{i=1}^n M^{u-e_i})=\dim C_q(M^u,\bigcup_{i=1}^n M^{u-e_i})$. \\
On the other hand,
$ \dim H_q(M^u,\bigcup_{i=1}^n M^{u-e_i})=\dim \ker (\partial^{rel}_q)/\mathrm{im}\, (\partial^{rel}_{q+1})\le \dim C_q(M^u,\bigcup_{i=1}^n M^{u-e_i})$. Moreover, if $\partial^{rel}_q$ is trivial for all $q\in \Z$, then $\ker (\partial^{rel}_q)= C_q(M^u, \bigcup_{i=1}^n M ^{u-e_i})$ and $\mathrm{im}\, (\partial^{rel}_{q+1})=0$. \\
Hence, $\dim H_q(M^u, \bigcup_{i=1}^n M ^{u-e_i})=\dim C_q(M^u,\bigcup_{i=1}^n M ^{u-e_i})=m_q(u)$. Thus, the  claim follows by applying \cref{lem:relative}.
\end{proof}

The inequality of  \cref{prop:crit}  can be seen as a generalization of standard Morse equalities (\ref{eq:ineq}) for persistence modules, where homology needs to be replaced by relative homology. This motivates the following definition of relative-perfectness.

\begin{definition}\label{def:perfect}
We say that $V$  is a {\em  relative-perfect} discrete gradient vector field  if 
$$m_q(u)=\dim H_q(K^u,\bigcup_{i=1}^n K^{u-e_i})$$
for every $q\in\Z$.
\end{definition}

An example of a relative-perfect discrete gradient compared to one that is only consistent with $f$ is shown in \cref{fig:gradient}:
for $u=(1,2)$ and $q=1$, in the middle, we have a critical $1$-simplex, thus $m_1(1,2)=1$.
However, $K^{(1,2)}$ is composed of the two vertexes with values $(0,1)$ and $(1,2)$ along with the critical $1$-simplex whereas the union of all previous steps reduces to $K^{(0,1)}$.
Hence, $\dim H_1(K^u,\bigcup_{i=1}^n K^{u-e_i})=0$ which is strictly less than $m_1(1,2)=1$;
on the right, we see a relative-perfect discrete gradient with each of its critical simplices appearing at a multigrade where the relative homology is non-trivial.

A standard application of the rank-nullity formula to the long exact homology sequence of the pair $(K ^u,\bigcup_{i=1}^nK ^{u-e_i})$
gives the following result. 

\begin{proposition}\label{prop:ker-coker}
For any $q\in\Z$ and any $u\in\Z^n$,  denoting by  $j_q^u \colon   H_q(\bigcup_{i=1}^nK ^{u-e_i})\to H_q(K ^u)$ the maps induced by the inclusion of cell complexes, it holds that
 $$
 \dim H_q\left(K ^u,\bigcup_{i=1}^nK ^{u-e_i}\right)= \dim\cok (j_q^u) + \dim\ker (j_{q-1}^u).
 $$
\end{proposition}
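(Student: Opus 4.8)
The plan is to feed the long exact homology sequence of the pair $(K^u,A)$, where I write $A:=\bigcup_{i=1}^nK^{u-e_i}$, into the rank--nullity formula, exactly as the statement suggests. Since each $K^{u-e_i}$ is a subcomplex of $K^u$, their union $A$ is again a subcomplex of $K^u$, so the pair $(K^u,A)$ is well defined and its long exact sequence is available. First I would write down the relevant portion,
$$
\xymatrix@C=0.9cm{
H_q(A) \ar[r]^-{j_q^u} & H_q(K^u) \ar[r]^-{p_q} & H_q(K^u,A) \ar[r]^-{\partial_q} & H_{q-1}(A) \ar[r]^-{j_{q-1}^u} & H_{q-1}(K^u),
}
$$
where $p_q$ is induced by the inclusion of pairs and $\partial_q$ is the connecting homomorphism. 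As all the vector spaces involved are finite-dimensional over $\F$, the whole argument can be carried out at the level of dimensions.

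The key step is to split off the middle term using exactness at $H_q(K^u,A)$, which gives $\ker(\partial_q)=\im(p_q)$; hence $\partial_q$ induces an isomorphism $H_q(K^u,A)/\im(p_q)\cong\im(\partial_q)$, so that
$$
\dim H_q(K^u,A)=\dim\im(p_q)+\dim\im(\partial_q).
$$
It then remains only to identify the two summands with the asserted kernel and cokernel. Exactness at $H_q(K^u)$ gives $\ker(p_q)=\im(j_q^u)$, whence $\im(p_q)\cong H_q(K^u)/\im(j_q^u)=\cok(j_q^u)$; and exactness at $H_{q-1}(A)$ gives $\im(\partial_q)=\ker(j_{q-1}^u)$. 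Substituting these two identifications yields precisely
$$
\dim H_q(K^u,A)=\dim\cok(j_q^u)+\dim\ker(j_{q-1}^u).
$$

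I do not expect any genuine obstacle: this is a purely formal diagram chase in an exact sequence of finite-dimensional vector spaces. The only care needed is bookkeeping — matching each exactness statement to the correct adjacent term, namely using $\ker(p_q)=\im(j_q^u)$ to produce the cokernel summand and $\im(\partial_q)=\ker(j_{q-1}^u)$ to produce the kernel summand — and checking that $A$ is a subcomplex of $K^u$ so that the long exact sequence applies. Everything else is the standard additivity of dimension along a short exact sequence.
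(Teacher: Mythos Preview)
Your proof is correct and follows essentially the same approach as the paper: both apply the rank--nullity formula along the long exact sequence of the pair $(K^u,\bigcup_{i=1}^n K^{u-e_i})$, using exactness at each of the three relevant terms to identify $\dim H_q(K^u,A)$ with $\dim\cok(j_q^u)+\dim\ker(j_{q-1}^u)$. Your organization is, if anything, slightly more direct, since you decompose $\dim H_q(K^u,A)$ immediately via $\ker(\partial_q)=\im(p_q)$ rather than starting from the right-hand side.
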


\begin{proof}
Let us consider the long exact homological sequence of the pair $(K^u,\bigcup_{i=1}^nK^{u-e_i})$:
 $$
 \xymatrix@R=1cm{
 \cdots	\ar[r]^-{\delta_{q+1}^u}
 &
 H_q(\bigcup_{i=1}^nK^{u-e_i})
 \ar[r]^-{i_{q}^u}
 &
 H_q(K^u)
  \ar[r]^-{j_{q}^u}
 & 
 H_q(K^u,\bigcup_{i=1}^nK^{u-e_i})
   \ar[r]^-{\delta_{q}^u}
 &
 \cdots
 }.
 $$
 Because the sequence is exact, applying  the rank-nullity dimension formula, we deduce that
 \begin{eqnarray*}
 \dim\cok (i_q^u) + \dim\ker (i_{q-1}^u)
 &=&
 \dim H_q(K^u) - \dim \ker (j_q^u) + \dim \im(\delta_q^u)
 \\
 &=&
 \dim H_q(K^u) - \dim \ker (j_q^u) + \dim H_q(K^u,\bigcup_{i=1}^nK^{u-e_i}) - \dim \ker(\delta_q^u)
 \\
  &=&
 \dim H_q(K^u) - \dim \ker (j_q^u) + \dim H_q(K^u,\bigcup_{i=1}^nK^{u-e_i}) - \dim \im (j_q^u)
 \\
 &=&
 \dim H_q(K^u,\bigcup_{i=1}^nK^{u-e_i}).
 \end{eqnarray*}
\end{proof}

In other words, a discrete gradient vector field $V$ consistent with a multi-filtration  is relative-perfect provided that each of its critical cells contributes either to the birth  or to the death of a homology class:

\begin{enumerate}
\item  $\dim \cok (j_q^u)$ is the number of linearly independent   $q$-cycles in $H_q(K^u)$ not coming from  $H_q(\bigcup_{i=1}^nK ^{u-e_i})$;
\item $\dim \ker (j_{q-1}^u)$ is the number of linearly independent   $(q-1)$-cycles in $H_{q-1}(\bigcup_{i=1}^nK ^{u-e_i})$ that become trivial in   $H_{q-1}(K^{u})$. 
\end{enumerate}

For the case $n=1$, we have $j_q^u=i_q^{u-1,u}$, that is the map induced by the inclusion of $K^{u-1}$ into $K^u$. Hence, also recalling \cref{eq:betti-koszul}, for $n=1$ our  definition of relative-perfectness can be equivalently reformulated as follows.

\begin{proposition}\label{prop:posneg}
 A discrete gradient vector field $V$ consistent with a 1-filtration ${\mathcal K}=\{K^u\}_{u\in \Z}$ of a cell complex $K$ is  relative-perfect if and only if each critical $k$-cell $\s$ of $V$ is   either a positive or a negative cell. Equivalently,  $V$ is relative-perfect if and only if
 $$m_q(u)=\betti_0^q(u)+\betti_1^{q-1}(u).$$
\end{proposition}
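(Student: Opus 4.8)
The plan is to establish the two asserted equivalences separately: first the purely algebraic identity expressing relative-perfectness through Betti tables, and then its reading in terms of positive and negative cells.

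First I would specialize \cref{prop:ker-coker} to $n=1$. The standard basis of $\Z^1$ consists of the single vector $e_1=1$, so $\bigcup_{i=1}^n K^{u-e_i}$ collapses to $K^{u-1}$ and the inclusion-induced map $j_q^u$ becomes $i_q^{u-1,u}$. Thus \cref{prop:ker-coker} reads $\dim H_q(K^u,K^{u-1})=\dim\cok(i_q^{u-1,u})+\dim\ker(i_{q-1}^{u-1,u})$, and by the Koszul formulas \cref{eq:betti-koszul} the right-hand side is exactly $\betti_0^q(u)+\betti_1^{q-1}(u)$. Substituting this into \cref{def:perfect} (with $n=1$) yields at once that $V$ is relative-perfect if and only if $m_q(u)=\betti_0^q(u)+\betti_1^{q-1}(u)$ for all $q$ and all $u$, which is the last equivalence in the statement. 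Note that the inequality $m_q(u)\ge \betti_0^q(u)+\betti_1^{q-1}(u)$ is automatic from \cref{prop:crit}, so only the vanishing of the gap requires an argument.

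Next I would interpret this gap using the persistence pairing of the filtered Morse complex $\{M^u\}$, which carries the same persistence module as $\{K^u\}$ by consistency (diagram (\ref{eq:chainhomotopy})). Every critical cell is an endpoint of a unique persistence interval: a positive cell is a left endpoint, a negative cell a right endpoint. By the definitions recalled in \cref{sec:multifiltration}, a critical $q$-cell at grade $u$ is positive precisely when the class it creates survives in $H_q(K^u)$, hence contributes a generator to $\cok(i_q^{u-1,u})$; and it is negative precisely when it annihilates a class already present in $H_{q-1}(K^{u-1})$, hence contributes to $\ker(i_{q-1}^{u-1,u})$. Counting generators, the number of positive $q$-cells at $u$ equals $\betti_0^q(u)$ and the number of negative $q$-cells at $u$ equals $\betti_1^{q-1}(u)$, so that $m_q(u)=\betti_0^q(u)+\betti_1^{q-1}(u)$ holds if and only if every critical $q$-cell at $u$ is either positive or negative.

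The hard part will be controlling the cells that are neither positive nor negative, which is exactly where the gap comes from. Such a cell occurs precisely when its persistence partner enters at the very same grade $u$, producing a zero-length interval $[u,u)$: the class it creates is killed before reaching $H_q(K^u)$ (so it escapes $\cok(i_q^{u-1,u})$) and, dually, the class annihilated by its partner is born at $u$ and hence absent from $H_{q-1}(K^{u-1})$ (so it escapes $\ker(i_{q-1}^{u-1,u})$). I would verify that these same-grade pairs account for the entire discrepancy $m_q(u)-\betti_0^q(u)-\betti_1^{q-1}(u)$, so that the equality holds for every $q$ and $u$ if and only if no such pair exists, i.e. if and only if each critical cell is positive or negative. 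Chaining this with the algebraic identity of the first step closes the circle of equivalences and proves the proposition.
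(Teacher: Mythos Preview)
Your argument is correct and aligns with the paper's own reasoning. The paper does not give a separate formal proof of this proposition: it simply observes in the sentence preceding the statement that for $n=1$ one has $j_q^u=i_q^{u-1,u}$, so that \cref{prop:ker-coker} combined with the Koszul formulas \cref{eq:betti-koszul} immediately yields the equivalence with $m_q(u)=\betti_0^q(u)+\betti_1^{q-1}(u)$. Your first paragraph reproduces exactly this step. The remaining two paragraphs, where you unpack the positive/negative-cell characterization via persistence intervals and identify the gap with same-grade (zero-length) pairs, go beyond what the paper spells out; the paper treats that equivalence as implicit in the definitions of positive and negative cells recalled in \cref{sec:multifiltration}. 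So your route is the same, just more explicit on the combinatorial side.
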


The latter is precisely the property proved in \cite{RobWooShe11} for the discrete gradient vector field retrieved by  algorithm \texttt{ProcessLowerStars}  when applied to 3D cubical grids endowed with 1-filtrations. 

In the multi-parameter case,  relative-perfectness still ensures that all critical cells correspond to births or deaths of homology classes. However,  in this case  new homology classes can be created even without adding new cells,  as shown in  \cref{fig:virtual}. 
Thus, the idea of positive and negative cells is ineffective in the multi-parameter case, unless one introduces the idea of {\em virtual cells } as highlighted in \cite{Knudson}. 
Moreover in~\cref{fig:virtual}, one can also notice the dual situation where critical cells can be necessary to kill {\em virtual homology classes} that is classes simply coming from the union of previous steps and never appearing in the multi-filtration. 
Next section will make this idea precise in the case of two parameters.

\begin{figure}
\centering
\begin{tabular}{c c}
	\includegraphics[width=0.35\linewidth]{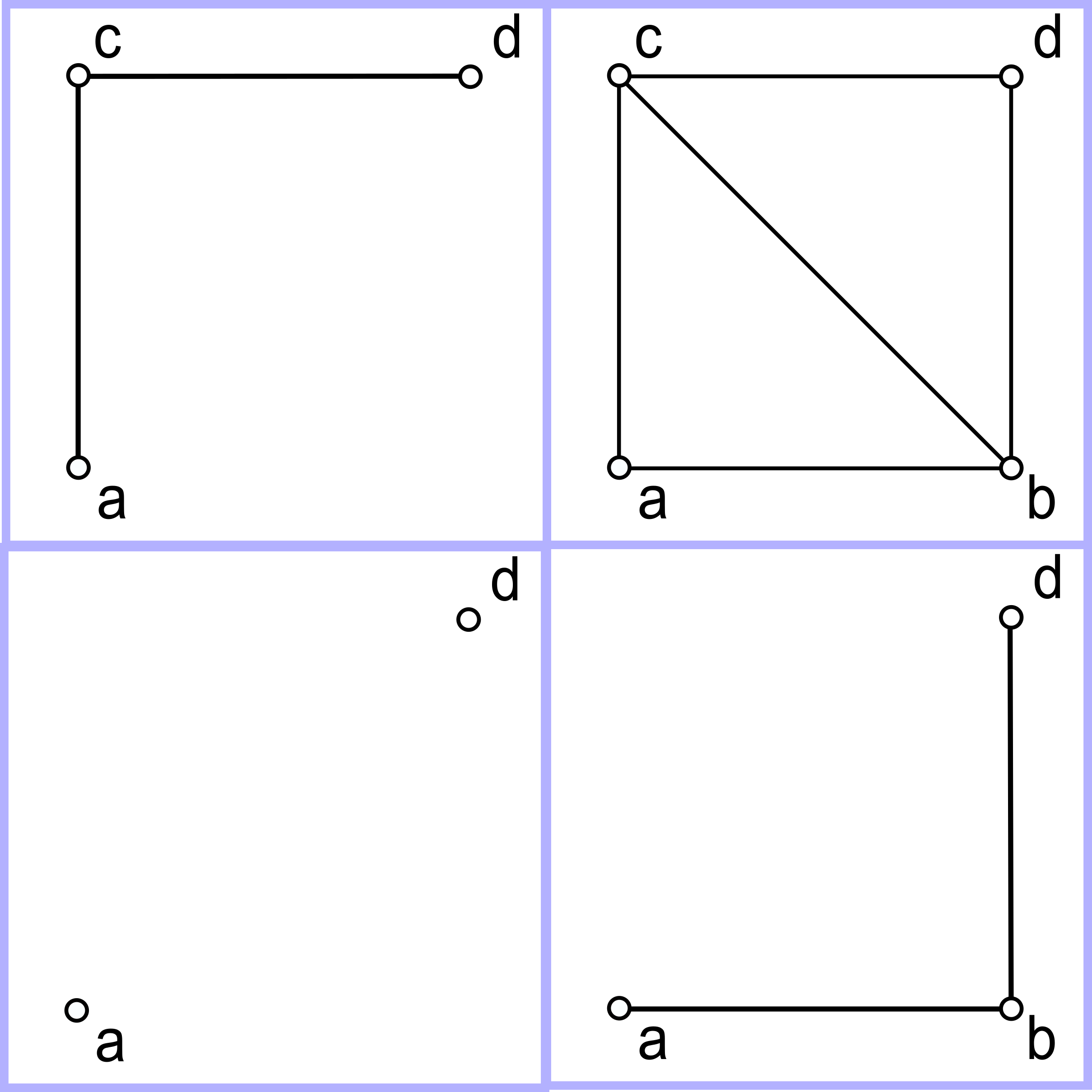}
	&
	\includegraphics[width=0.35\linewidth]{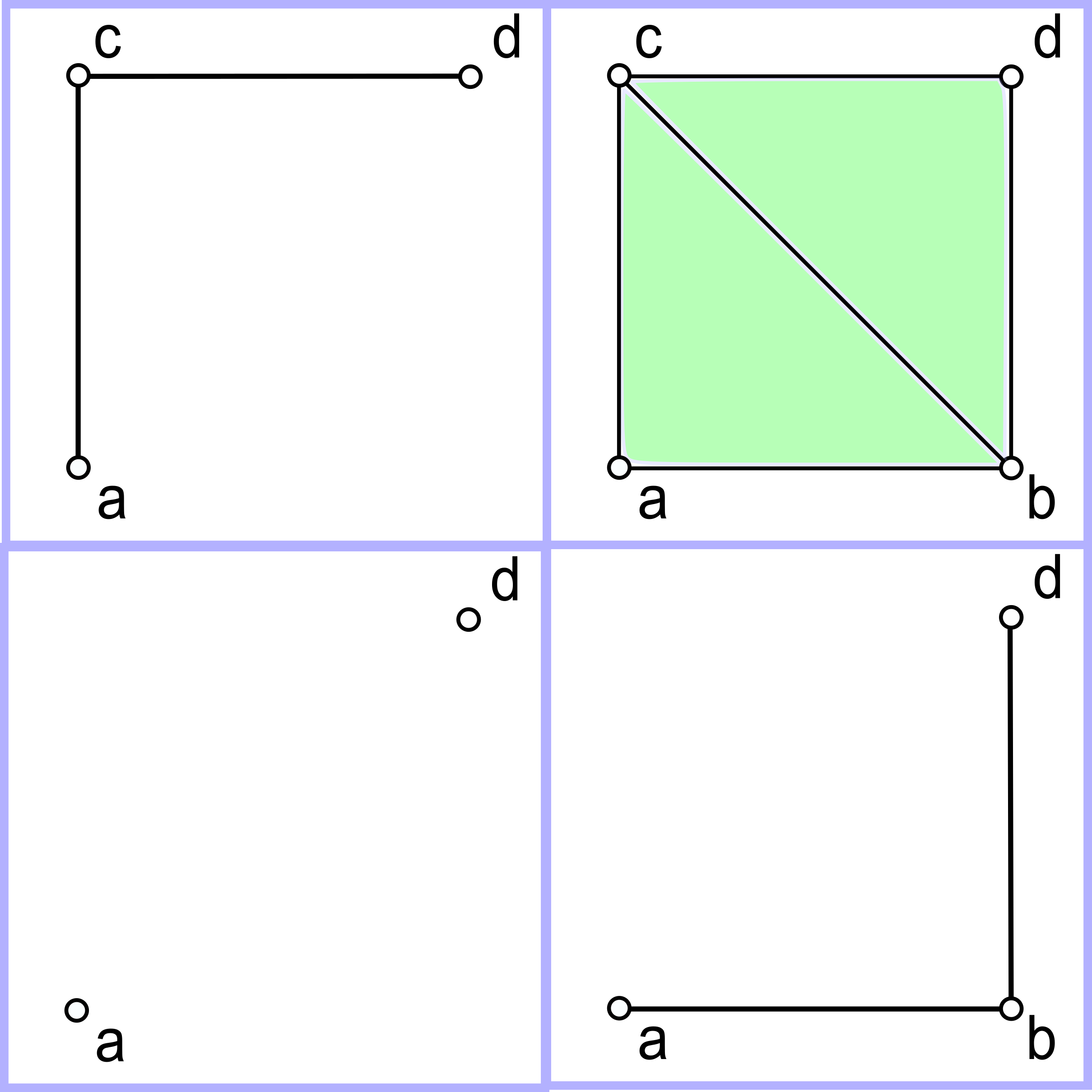} 
\end{tabular}
	\caption{In the multi-parameter case, birth of new homology classes may not correspond to newly added critical cells (loops on the left) or critical cells may be negative for homological classes simply due to the union of previous steps and never existing along the multi-filtration (2-simplices on the right).}
	\label{fig:virtual}
\end{figure}

We conclude the section noting that, contrary to usual perfectness, it is possible to have relative-perfect discrete gradient vector fields on the dunce hat, as shown in \cref{fig:duncehat}. In Section \ref{sec:retrieval} we will show that this is always the case for simplicial complexes of dimension 2 endowed with filtrations induced by component-wise injective functions on the vertices.

\begin{figure}
\begin{center}
 \includegraphics[width=6cm]{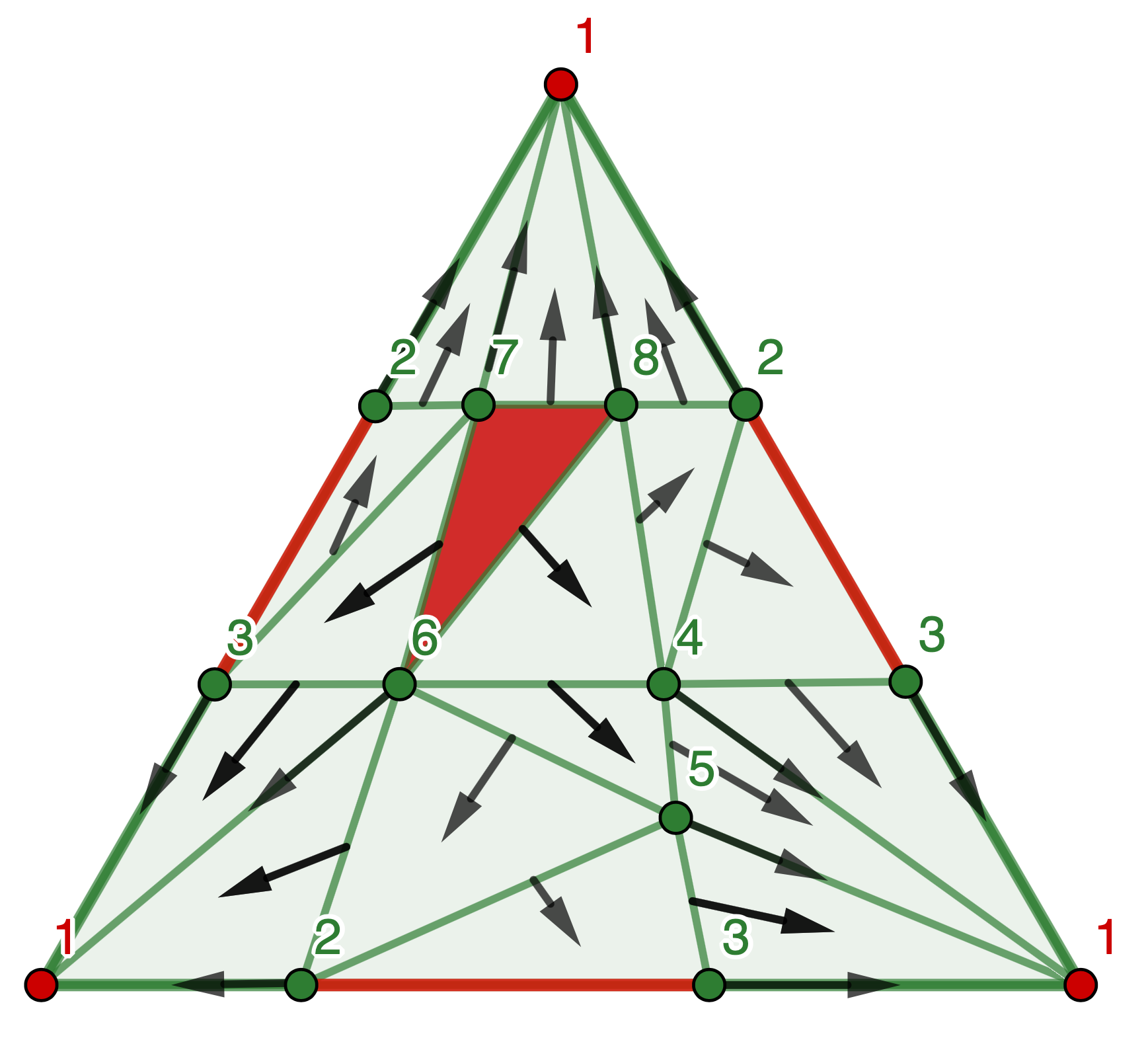} 
\end{center}
\caption{A relative-perfect discrete gradient vector field consistent with the filtration induced by the vertex indexing.}
\label{fig:duncehat}
\end{figure}

\section{Estimation of Betti tables via critical cells}\label{sec:inequalities}

In this section, we focus our attention on bi-filtrations, i.e. the case $n=2$. Our goal is to show that, for a discrete gradient vector field consistent with a bi-filtration,  the number of critical cells gives  bounds on the Betti tables values of the corresponding persistence module. Such bounds can be seen as a sort of Morse inequalities for persistent homology, generalizing the standard Morse inequalities (\ref{eq:ineq}) for homology. Such inequalities  will shed new light on relative-perfectness.

In this section, we consider the persistence module ${\mathbb V}=\{{\mathbb V}_u,i_q^{u,v}\}_{u\preceq v\in\Z^n}$ with ${\mathbb V}_u=H_q(K^u)$ and  $i_q^{u,v}\colon  H_q(K^u)\to H_q(K^v)$  induced by the inclusion maps $K^u\hookrightarrow K^v$, and assume that $K$ is equipped with a discrete gradient vector field consistent with the multi-filtration of $K$.  Moreover, in accordance with  \cref{eq:betti-koszul-n=2}, for any $u\in\Z^2$, we set  $x=u-e_1$, $y=u-e_2$, and $z=u-e_1-e_2$.

Our first step is to relate the dimensions of the kernel and cokernel of the linear maps $j_q^u \colon   H_q(K ^x\cup K^y)\to H_q(K ^u)$ induced by inclusions to the values in the Betti tables of the persistence module ${\mathbb V}$.
  
\begin{lemma}\label{prop:algebra}
Given the commutative diagram of finite dimensional vector spaces 
$$
 \xymatrix@C=0.6cm{
C \ar[rr]^-{\gamma}
\ar[dr]_-{\mu}
&&
D\ar[dl]^-{\iota}
\\
& A
}
$$
it holds that
\begin{enumerate}
\item $\dim\ker(\mu)=\dim (\im(\gamma)\cap\ker(\iota))+\dim\ker(\gamma)$;
\item $\dim\ker(\iota)=\dim\left(\im(\gamma) \cap\ker(\iota)\right)-\dim \im(\gamma)+\dim\left(\im(\gamma) +\ker(\iota)\right)$;
\item $\dim\cok(\iota)=\dim\cok (\mu)-\dim D +\dim\left(\im(\gamma) +\ker(\iota)\right)$.
\end{enumerate}
\end{lemma}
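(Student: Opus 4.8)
The plan is to exploit the single structural fact encoded by the diagram, namely that commutativity means $\mu=\iota\circ\gamma$, and then to reduce all three identities to elementary rank--nullity and subspace-dimension bookkeeping. No deep input is needed; the only auxiliary fact I would isolate first is the preimage dimension formula: for a linear map $\gamma\colon C\to D$ and a subspace $W\subseteq D$, one has $\dim\gamma^{-1}(W)=\dim\ker(\gamma)+\dim(W\cap\im(\gamma))$, which follows because $\gamma$ maps $\gamma^{-1}(W)$ onto $W\cap\im(\gamma)$ with kernel $\ker(\gamma)$.

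For item (1), I would observe that $\mu(c)=0$ is equivalent to $\gamma(c)\in\ker(\iota)$, so that $\ker(\mu)=\gamma^{-1}(\ker(\iota))$. Applying the preimage formula with $W=\ker(\iota)$ gives $\dim\ker(\mu)=\dim\ker(\gamma)+\dim(\im(\gamma)\cap\ker(\iota))$, which is exactly the claimed equality. For item (2), I would apply the Grassmann identity to the two subspaces $\im(\gamma)$ and $\ker(\iota)$ of $D$, namely $\dim(\im(\gamma)+\ker(\iota))=\dim\im(\gamma)+\dim\ker(\iota)-\dim(\im(\gamma)\cap\ker(\iota))$, and simply solve for $\dim\ker(\iota)$; this rearrangement reproduces the statement verbatim.

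For item (3), I would first compute the two cokernel dimensions by rank--nullity: $\dim\cok(\iota)=\dim A-\dim\im(\iota)=\dim A-\dim D+\dim\ker(\iota)$, while $\im(\mu)=\iota(\im(\gamma))$ gives, by restricting $\iota$ to $\im(\gamma)$, the relation $\dim\im(\mu)=\dim\im(\gamma)-\dim(\im(\gamma)\cap\ker(\iota))$ and hence $\dim\cok(\mu)=\dim A-\dim\im(\gamma)+\dim(\im(\gamma)\cap\ker(\iota))$. Substituting these two expressions, and using item (2) to eliminate $\dim(\im(\gamma)+\ker(\iota))$, reduces the desired identity to the tautology $\dim\cok(\iota)=\dim A-\dim D+\dim\ker(\iota)$.

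The argument carries no genuine obstacle; the whole lemma is a careful reorganization of dimension counts. The only point requiring attention---and the mild bulk of the work---is the final substitution in (3), where one must track the cancellation of the $\dim\im(\gamma)$ and $\dim(\im(\gamma)\cap\ker(\iota))$ terms. I would emphasize that the content of the lemma is precisely to express the three quantities $\dim\ker(\mu)$, $\dim\ker(\iota)$, and $\dim\cok(\iota)$ in terms of the two shared subspaces $\im(\gamma)\cap\ker(\iota)$ and $\im(\gamma)+\ker(\iota)$, so that these can subsequently be matched against the Koszul-complex descriptions of the Betti tables in \cref{eq:betti-koszul-n=2}.
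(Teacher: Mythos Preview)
Your proposal is correct and follows essentially the same approach as the paper: commutativity (via the preimage description $\ker(\mu)=\gamma^{-1}(\ker(\iota))$) for item (1), Grassmann's formula for item (2), and rank--nullity together with the first two items for item (3). The paper's proof of (3) is organized as a single chain of equalities rather than your two separate cokernel computations, but the ingredients and cancellations are identical.
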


\begin{proof} 

The first claim follows by the commutativity of the diagram,  while the second claim follows
immediately from the Grassmann's formula relating the dimensions of the sum and intersection of vector spaces.
As for the third claim, repeatedly applying the rank-nullity formula, the Grassmann's formula, and the the first claim, we see that
\begin{align*}
\dim\cok(\iota)&=\dim A-\dim\im (\iota) \\
 =&\dim A-\dim\im(\mu)+\dim\im(\mu)-(\dim D-\dim\ker(\iota)) \\
=&\dim\coker(\mu)+(\dim C-\dim\ker(\mu))-\dim D+\dim\ker(\iota)\\
=&\dim\coker(\mu) +(\dim\ker(\gamma)+\dim\im(\gamma))-(\dim (\im(\gamma)\cap\ker(\iota))\\
&+\dim\ker(\gamma))-\dim D+\dim\ker(\iota)\\
=&\dim\coker(\mu)-\dim D+\dim (\im(\gamma)+\ker(\iota)).
\end{align*}
\end{proof}

\begin{proposition}\label{prop:betti-equality}
For any $q\in\Z$,  let $\alpha_q^u\colon  H_q(K^x)\to H_q(K^x\cup K^y)$,  $\beta_q^u\colon  H_q(K^y)\to H_q(K^x\cup K^y)$,  $i_q^u \colon   H_q(K^x\cup K^y)\to H_q(K^u)$ be the linear maps induced by the inclusions of cell complexes.
It holds that:
\begin{enumerate}
\item $\dim\cok (i_q^u)=\betti_0^{q}(z) - \dim H_{q}(K^x\cup K^y) + \dim(\ker (i_{q}^u)+\im(\alpha_{q}^u-\beta_{q}^u)).$
\item $\dim\ker (i_q^u)=\betti_1^{q}(z)+ \betti_2^{q-1}(z)- \dim H_{q}(K^x\cup K^y) + \dim(\ker (i_{q}^u)+\im(\alpha_{q}^u-\beta_{q}^u)).$
\end{enumerate}
\end{proposition}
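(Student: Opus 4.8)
The plan is to recognize the three inclusion-induced maps of the statement as the pieces of the Mayer--Vietoris sequence of the cover $\{K^x,K^y\}$ of $K^x\cup K^y$, and then to feed this into the purely linear-algebraic \cref{prop:algebra}. Since the multi-filtration is one-critical we have $K^z=K^x\cap K^y$, so this cover yields the long exact sequence
\begin{equation*}
\cdots\to H_q(K^z)\xrightarrow{\,\spl^u\,}H_q(K^x)\oplus H_q(K^y)\xrightarrow{\,\alpha_q^u-\beta_q^u\,}H_q(K^x\cup K^y)\xrightarrow{\,\partial_q\,}H_{q-1}(K^z)\to\cdots,
\end{equation*}
where $\alpha_q^u-\beta_q^u$ denotes the merge map $(a,b)\mapsto\alpha_q^u(a)-\beta_q^u(b)$ and $\spl^u$ is its split map as in \cref{eq:betti-koszul-n=2}. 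The one identity that glues the two pictures together is that the Koszul merge map factors as $\mer^u=i_q^u\circ(\alpha_q^u-\beta_q^u)$, because both sides send $(a,b)$ to $i_q^{x,u}(a)-i_q^{y,u}(b)$.

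Given this factorization I would apply \cref{prop:algebra} to the triangle with $C=H_q(K^x)\oplus H_q(K^y)$, $D=H_q(K^x\cup K^y)$, $A=H_q(K^u)$, $\gamma=\alpha_q^u-\beta_q^u$, $\iota=i_q^u$ and $\mu=\mer^u$. Part (3) of that lemma then reads
\begin{equation*}
\dim\cok(i_q^u)=\dim\cok(\mer^u)-\dim H_q(K^x\cup K^y)+\dim\bigl(\ker(i_q^u)+\im(\alpha_q^u-\beta_q^u)\bigr),
\end{equation*}
and part (2) reads
\begin{equation*}
\dim\ker(i_q^u)=\dim\bigl(\im(\alpha_q^u-\beta_q^u)\cap\ker(i_q^u)\bigr)-\dim\im(\alpha_q^u-\beta_q^u)+\dim\bigl(\ker(i_q^u)+\im(\alpha_q^u-\beta_q^u)\bigr).
\end{equation*}
It then only remains to rewrite the three terms $\dim\cok(\mer^u)$, $\dim(\im(\alpha_q^u-\beta_q^u)\cap\ker(i_q^u))$ and $\dim\im(\alpha_q^u-\beta_q^u)$ through Betti numbers.

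The first term is $\betti_0^q(u)$ immediately by \cref{eq:betti-koszul-n=2}, which already establishes claim (1). For the second, exactness at $H_q(K^x)\oplus H_q(K^y)$ gives $\ker(\alpha_q^u-\beta_q^u)=\im(\spl^u)$; combining this with the factorization $\mer^u=i_q^u\circ(\alpha_q^u-\beta_q^u)$, the restriction of $\alpha_q^u-\beta_q^u$ to $\ker(\mer^u)$ induces an isomorphism $\ker(\mer^u)/\im(\spl^u)\cong\im(\alpha_q^u-\beta_q^u)\cap\ker(i_q^u)$, so this dimension equals $\betti_1^q(u)$ by \cref{eq:betti-koszul-n=2}. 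For the third, the connecting map $\partial_q$ has kernel $\im(\alpha_q^u-\beta_q^u)$ (exactness at $H_q(K^x\cup K^y)$) and image equal to $\ker(\spl^u_{q-1})$, the kernel of the split map taken in homological degree $q-1$ (exactness at $H_{q-1}(K^z)$); hence $\dim\im(\alpha_q^u-\beta_q^u)=\dim H_q(K^x\cup K^y)-\dim\ker(\spl^u_{q-1})=\dim H_q(K^x\cup K^y)-\betti_2^{q-1}(u)$. Substituting these last two identities into the formula for $\dim\ker(i_q^u)$ yields claim (2).

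I expect the genuinely delicate point to be this last identification, where the homological degree drops from $q$ to $q-1$ as one passes through $\partial_q$: one has to keep track that the relevant split map is the one in degree $q-1$, so that its kernel has dimension $\betti_2^{q-1}(u)$, and that the cokernel of the merge-into-union map in degree $q$ is isomorphic, via $\partial_q$, to that kernel. Everything else is a mechanical substitution through the Koszul formulas \cref{eq:betti-koszul-n=2} and the exactness of Mayer--Vietoris, using only the one-criticality identity $K^z=K^x\cap K^y$.
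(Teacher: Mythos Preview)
Your proof is correct and follows essentially the same route as the paper's: apply \cref{prop:algebra} to the triangle with $\gamma=\alpha_q^u-\beta_q^u$, $\iota=i_q^u$, $\mu=\mer_q^u$, then identify the resulting terms via the Koszul formulas \cref{eq:betti-koszul-n=2} and exactness of the Mayer--Vietoris sequence for the cover $\{K^x,K^y\}$ with $K^x\cap K^y=K^z$. Your direct isomorphism $\ker(\mer^u)/\im(\spl^u)\cong\im(\alpha_q^u-\beta_q^u)\cap\ker(i_q^u)$ is exactly the content the paper extracts from part~(1) of \cref{prop:algebra} combined with exactness at $H_q(K^x)\oplus H_q(K^y)$; the computation of $\betti_2^{q-1}$ via the connecting map is identical. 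Note that your Betti tables are evaluated at $u$ rather than at $z$ as in the statement; this agrees with what the proof actually establishes (and with the subsequent corollaries), so the $z$ in the displayed statement appears to be a typo.
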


\begin{proof}
By \cref{prop:algebra} applied to the commutative diagram
 $$
 \xymatrix@C=0.6cm{
H_{q}(K^x)\oplus H_{q}(K^y)
\ar[rr]^-{\alpha_{q}^u - \beta_{q}^u}
\ar[dr]_-{\mer_{q}^u}
&&
H_{q}(K^x\cup K^y)	
\ar[dl]^-{i_{q}^u}
\\
&
H_{q-1}(K^u).
}
$$
with $\mer_q^u$ as in \cref{eq:betti-koszul-n=2},
we get 
\begin{flalign}
&   \dim \cok (i_q^u)  =\dim\cok(\mer_q^u) - \dim H_{q}(K^x\cup K^y) + \dim(\ker (i_{q}^u)+\im(\alpha_{q}^u-\beta_{q}^u)), \label{eq:cok}\\
&\dim \ker(\mer_{q}^u)=\dim\ker(\alpha_{q}^u - \beta_{q}^u) 
+ \dim(\ker (i_{q}^u) \cap \im(\alpha_{q}^u - \beta_{q}^u)),  \label{eq:intersection}\\
&\dim \ker (i_q^u)=\dim\left(\im(\alpha_{q}^u-\beta_{q}^u) \cap\ker (i_{q}^u)\right)-\dim \im(\alpha_{q}^u-\beta_{q}^u)+\dim\left(\im(\alpha_{q}^u-\beta_{q}^u) +\ker (i_{q}^u)\right)\label{eq:ker}  
\end{flalign}

From (\ref{eq:cok}) we immediately get the first claim because $\dim\cok(\mer_q^u)=\betti_0^{q}(z)$ by \cref{eq:betti-koszul-n=2}. 
To prove the second claim, let us now consider the  Mayer-Vietoris exact homological sequence of the triad  $(K^x\cup K^y, K^x, K^y)$. Observing that by construction $K^x\cap  K^y=K^z$, we have
\begin{equation}\label{eq:mv-on-union-n=2}
\xymatrix@R=1.5cm{
\cdots \ar[r]
&
H_q(K^z) \ar[r]^-{\spl_q^u}	
&
H_q(K^x) \oplus H_q(K^y) \ar[r]^-{\alpha_q^u -\beta_q^u}
&
H_q(K^x\cup K^y) 	
\ar[r]^-{\delta_q^u}
&
H_{q-1}(K^z) \ar[r]^-{\spl_{q-1}^u}	
&
\cdots
}
\end{equation}
From  (\ref{eq:betti-koszul-n=2}), (\ref{eq:intersection}), and the exactness of sequence (\ref{eq:mv-on-union-n=2}) at $H_q(K^x)\oplus H_q(K^y)$, we see that
\begin{equation} 
\begin{aligned}
\betti_1^{q}(z) &=\dim \ker(\mer_{q}^u) - \dim \im(\spl_{q}^u) \\ 
&= 
\dim\ker(\alpha_{q}^u - \beta_{q}^u) 
+ \dim(\ker (i_{q}^u) \cap \im(\alpha_{q}^u - \beta_{q}^u)) - \dim \im(\spl_{q}^u) 
\\ 
&=\dim(\ker (i_{q}^u) \cap \im(\alpha_{q}^u - \beta_{q}^u)).
\end{aligned}
\label{eq:xiuno}
\end{equation}

Analogously, from (\ref{eq:betti-koszul-n=2}), the exactness of sequence  (\ref{eq:mv-on-union-n=2}) at $H_q(K^x\cup K^y)$, we get
\begin{equation} 
\begin{aligned}
 \betti_2^{q-1}(z)&= \dim\ker(\spl_{q-1}^u) = \dim\im(\delta_{q}^u). \end{aligned}
\label{eq:xiduee}
\end{equation} 
By applying the rank-nullity formula linking the dimensions of the kernel and the image of a linear map and, again, the same exactness, we get
\begin{equation} 
\begin{aligned}
 \betti_2^{q-1}(z) &= \dim H_{q}(K^x\cup K^y) 
  - \dim\ker(\delta_{q}^u) \\
  & = \dim H_{q}(K^x\cup K^y)
  -  \dim\im(\alpha_{q}^u - \beta_{q}^u).
 \end{aligned}
\label{eq:xidue}
\end{equation} 
Hence, from (\ref{eq:ker}), (\ref{eq:xiuno}), and (\ref{eq:xidue}), we get
\begin{align*}
\dim\ker (i_q^u)&=\betti_1^{q}(z)-\dim \im(\alpha_{q}^u-\beta_{q}^u)+\dim\left(\im(\alpha_{q}^u-\beta_{q}^u) +\ker (i_{q}^u)\right)\\
&=\betti_1^{q}(z)+\betti_2^{q-1}(z)-\dim H_{q}(K^x\cup K^y)+\dim\left(\im(\alpha_{q}^u-\beta_{q}^u) +\ker (i_{q}^u)\right).
\end{align*}
\end{proof}

\begin{corollary}\label{cor:betti-inequality}
For any $q\in\Z$, 
 \begin{equation*}
 \betti_0^{q}(u)+ \betti_1^{q-1}(u)- \betti_2^{q-1}(u)
 \leq \dim H_q(K^u,K^x\cup K^y)
 \leq \betti_0^{q}(u) + \betti_1^{q-1}(u) + \betti_2^{q-2}(u)
 \end{equation*}
\end{corollary}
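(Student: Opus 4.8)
The plan is to merge the additive splitting of relative homology provided by \cref{prop:ker-coker} with the two identities of \cref{prop:betti-equality}, and then to squeeze the two ``defect'' terms that survive. First I would specialize \cref{prop:ker-coker} to $n=2$: here $\bigcup_{i=1}^{2}K^{u-e_i}=K^x\cup K^y$ and $j_q^u=i_q^u$, so
\[
\dim H_q(K^u,K^x\cup K^y)=\dim\cok(i_q^u)+\dim\ker(i_{q-1}^u).
\]
Into this I would substitute the first identity of \cref{prop:betti-equality} for the cokernel summand and the second identity, read at homology degree $q-1$, for the kernel summand. Abbreviating the recurring correction by
\[
D_q:=\dim H_q(K^x\cup K^y)-\dim\bigl(\ker(i_q^u)+\im(\alpha_q^u-\beta_q^u)\bigr),
\]
both substituted identities take the shape ``Betti term $-\,D_{\bullet}$'', so adding them telescopes to
\[
\dim H_q(K^u,K^x\cup K^y)=\betti_0^{q}(u)+\betti_1^{q-1}(u)+\betti_2^{q-2}(u)-D_q-D_{q-1}.
\]

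The whole statement then reduces to locating the range of the defect $D_q$. Since $\ker(i_q^u)$ and $\im(\alpha_q^u-\beta_q^u)$ are both subspaces of $H_q(K^x\cup K^y)$, so is their sum, whence $D_q\ge 0$; inserting $D_q,D_{q-1}\ge 0$ into the equality yields the upper bound immediately. For the lower bound I would observe that the sum contains $\im(\alpha_q^u-\beta_q^u)$, so $D_q\le \dim H_q(K^x\cup K^y)-\dim\im(\alpha_q^u-\beta_q^u)$. The right-hand side is the dimension of $\cok(\alpha_q^u-\beta_q^u)$, and the Mayer--Vietoris sequence \eqref{eq:mv-on-union-n=2} of the triad $(K^x\cup K^y,K^x,K^y)$ identifies it with $\im(\delta_q^u)\cong\ker(\spl_{q-1}^u)$, which has dimension $\betti_2^{q-1}(u)$ by \eqref{eq:betti-koszul-n=2}. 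Hence $0\le D_q\le\betti_2^{q-1}(u)$ and, shifting the degree, $0\le D_{q-1}\le\betti_2^{q-2}(u)$; feeding these into the equality gives
\[
\dim H_q(K^u,K^x\cup K^y)\ge \betti_0^{q}(u)+\betti_1^{q-1}(u)+\betti_2^{q-2}(u)-\betti_2^{q-1}(u)-\betti_2^{q-2}(u),
\]
which is exactly the claimed $\betti_0^{q}(u)+\betti_1^{q-1}(u)-\betti_2^{q-1}(u)$.

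I expect the only genuine obstacle to be the bookkeeping of the correction terms: one must notice that precisely the same subspace-sum defect governs both the cokernel identity and the degree-shifted kernel identity, and that its codimension inside $H_q(K^x\cup K^y)$ is controlled by $\betti_2$ through the Mayer--Vietoris connecting homomorphism. Keeping the homological degrees aligned is what makes the two $\betti_2$ contributions enter the final double inequality asymmetrically, with $\betti_2^{q-1}$ appearing only in the lower bound and $\betti_2^{q-2}$ only in the upper bound.
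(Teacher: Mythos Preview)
Your proposal is correct and follows essentially the same route as the paper: both combine \cref{prop:ker-coker} with the two identities of \cref{prop:betti-equality}, then bound the resulting correction terms using the containment $\im(\alpha_q^u-\beta_q^u)\subseteq \ker(i_q^u)+\im(\alpha_q^u-\beta_q^u)\subseteq H_q(K^x\cup K^y)$ together with \eqref{eq:xidue}. Your abbreviation $D_q$ makes the bookkeeping slightly cleaner than in the paper, but the argument is otherwise identical.
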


\begin{proof}
By \cref{prop:ker-coker}, $\dim H_q(K^u,K^x\cup K^y)=\dim\cok (i_q^u)+\dim(\ker (i_{q-1}^u))$. Thus, from \cref{prop:betti-equality},
\begin{equation}
\begin{aligned}
\dim H_q(K^u,K^x\cup K^y)=&\betti_0^{q}(u)+ \betti_1^{q-1}(u)+ \betti_2^{q-2}(u)\\ 
 &- \dim H_{q}(K^x\cup K^y) + \dim(\ker (i_{q}^u)+\im(\alpha_{q}^u-\beta_{q}^u))\\
 &- \dim H_{q-1}(K^x\cup K^y) + \dim(\ker (i_{q-1}^u)+\im(\alpha_{q-1}^u-\beta_{q-1}^u)).
\end{aligned}
\label{eq:relative}
\end{equation}
Since it holds that $(\ker (i_q^u) + \im(\alpha_q^u - \beta_q^u))\supseteq  \im(\alpha_q^u - \beta_q^u)$ for any integer $q$, from \cref{eq:relative} we deduce that
\begin{align*}
\dim H_q(K^u,K^x\cup K^y)
&
\geq \betti_0^{q}(u)
+ \betti_1^{q-1}(u)
+ \betti_2^{q-2}(u)
- \dim H_{q}(K^x\cup K^y)
+ \dim \im(\alpha_q^u - \beta_q^u)
\\
& \quad
- \dim H_{q-1}(K^x\cup K^y)
+ \dim \im(\alpha_{q-1}^u - \beta_{q-1}^u)
\\
& =
\betti_0^{q}(u)
+ \betti_1^{q-1}(u)
-\betti_2^{q-1}(u),
\end{align*}
again by \cref{eq:xidue},  thus proving the  left-hand inequality.

On the other hand,  for all integers $q$, we have
$$\dim H_q(K^x\cup K^y)-\dim (\ker (i_q^u) + \im(\alpha_q^u - \beta_q^u))
 \geq 0$$
so  that \cref{eq:relative} implies the right-hand inequality.
\end{proof}

As an immediate consequence of \cref{prop:crit}, we deduce the following result showing that the number of critical cells of a discrete gradient vector field may be used to estimate Betti tables of persistence modules at least for bi-filtrations.

\begin{corollary}\label{cor:inequalities}
For any $u\in\Z^2$ and $q\in\Z$,
$$m_q(u)\ge \betti_0^{q}(u)
 + \betti_1^{q-1}(u)
 - \betti_2^{q-1}(u).$$
Moreover, if the gradient  is relative-perfect, then it also holds
$$m_q(u)\le \betti_0^{q}(u)
 + \betti_1^{q-1}(u)
 + \betti_2^{q-1}(u).$$
\end{corollary}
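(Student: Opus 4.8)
The plan is to obtain both inequalities by combining \cref{prop:crit}, specialized to the bi-filtration case $n=2$, with the double inequality of \cref{cor:betti-inequality}; no further machinery is needed. First I would rewrite \cref{prop:crit} for $n=2$. With the standing notation $x=u-e_1$ and $y=u-e_2$, the union $\bigcup_{i=1}^2 K^{u-e_i}$ equals $K^x\cup K^y$, so \cref{prop:crit} yields
\[
m_q(u)\ge \dim H_q(K^u,K^x\cup K^y),
\]
and, by \cref{def:perfect}, this inequality becomes an equality precisely when $V$ is relative-perfect.

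For the unconditional lower bound I would chain the displayed inequality with the left-hand estimate of \cref{cor:betti-inequality}, namely $\dim H_q(K^u,K^x\cup K^y)\ge \betti_0^{q}(u)+\betti_1^{q-1}(u)-\betti_2^{q-1}(u)$. Concatenation gives
\[
m_q(u)\ge \betti_0^{q}(u)+\betti_1^{q-1}(u)-\betti_2^{q-1}(u),
\]
which is the first claim, and I would emphasize that it requires only consistency of $V$ with the bi-filtration, not relative-perfectness.

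For the upper bound I would use relative-perfectness to upgrade \cref{prop:crit} to the equality $m_q(u)=\dim H_q(K^u,K^x\cup K^y)$, and then apply the right-hand estimate of \cref{cor:betti-inequality}. This immediately bounds $m_q(u)$ from above by $\betti_0^{q}(u)+\betti_1^{q-1}(u)$ together with the relevant $\betti_2$ contribution, so that the conditional inequality drops out with essentially no computation.

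The one step that genuinely needs attention — and the only place I expect friction — is the homological index on the $\betti_2$ summand: the right-hand side of \cref{cor:betti-inequality} carries $\betti_2^{q-2}(u)$, whereas the asserted upper bound is stated with $\betti_2^{q-1}(u)$. Before finalizing I would reconcile this shift, checking whether it is absorbed by invoking \cref{cor:betti-inequality} at a reindexed degree or whether the stated bound is intended as a weaker, more symmetric companion to the lower bound. Apart from this bookkeeping, the argument is a direct substitution of the two cited results and introduces nothing new.
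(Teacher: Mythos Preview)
Your approach is exactly the one the paper uses: it simply chains \cref{prop:crit} (for $n=2$) with the two sides of \cref{cor:betti-inequality}, using relative-perfectness to turn the inequality of \cref{prop:crit} into an equality for the upper bound. The index discrepancy you flagged is real and cannot be resolved by reindexing---the upper bound coming out of \cref{cor:betti-inequality} carries $\betti_2^{q-2}(u)$, so the $\betti_2^{q-1}(u)$ in the stated corollary is a typo in the paper rather than a gap in your argument.
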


We may interpret  \cref{cor:inequalities} as a generalization of   \cref{prop:ker-coker} to the bi-filtration case, taking care of the possible presence of virtual cells as discussed after \cref{prop:posneg}. This is achieved by adding the term relative to the second Betti table $\betti_2$.

 Inequalities in  \cref{cor:inequalities} are sharp. 
To see that the first inequality can be an equality, we take $q=1$,  the simplicial complex $K$ of dimension 1 with four vertices $a,b,c,d$ and four edges $[a,b]$, $[b,c]$, $[c,d]$, $[d,a]$,  the function  $f$ defined on the vertices by $f(a)=(0,0)$, $f(c)=(1,1)$, $f(b)=(3,2)$, $f(d)=(2,3)$.
Moreover, the second inequality turns out  be an equality taking $q=2$, the simplicial complex $K$ with four vertices $a,b,c,d$, five edges $[a,b]$, $[b,c]$, $[c,d]$, $[d,a]$, $[b,d]$ and two triangles $[a,b,d]$ and $[b,c,d]$, the function  $f$ defined on the vertices by $f(a)=(0,0)$, $f(c)=(1,1)$, $f(b)=(3,2)$, $f(d)=(2,3)$.

\section{Retrieval of relative-perfect discrete gradient vector fields}\label{sec:retrieval-perfect}

Throughout this section we  assume  $K$ to be  a simplicial complex of dimension at most 2, filtered by the sublevel sets of the extension of  a component-wise injective function $f$ defined on the vertices of $K$ as described in \cref{sec:retrieval}.
Our goal is to prove that, under such assumptions,  there always exists a discrete gradient vector field compatible with  such filtration that is relative-perfect. The proof will be constructive and based on repeatedly using the routine {\texttt HomotopyExpansion} (see \cref{app:alg} for details)  on the sets of a suitable  partition of $K$ to build the desired discrete vector field.  To this aim, we start proving further properties of the considered  multi-filtration. 

We start observing that lower stars of simplices are contained in level sets.

\begin{lemma}\label{lem:levelset}
For every $\s\in K$, it holds that $ L_f(\s)\subseteq f^{-1}(f(\s))$.
\end{lemma}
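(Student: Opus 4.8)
The plan is to prove the inclusion by combining the monotonicity of $f$ along the face relation with the defining condition of the lower star. The only real content is that the two inequalities $f(\s)\preceq f(\t)$ and $f(\t)\preceq f(\s)$ together force $f(\t)=f(\s)$, so that any $\t$ in the lower star of $\s$ automatically lies in the level set $f^{-1}(f(\s))$.

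First I would record the monotonicity of $f$ with respect to the face relation. Recall from \cref{sec:retrieval} that $f$ is obtained by extending a component-wise injective vertex function $f_0$ via $f_i(\t)=\max\{f_i(\rho):\rho\text{ a facet of }\t\}$. Unwinding this recursion along chains of facets (by induction on $\dim\t$) shows that $f_i(\t)=\max\{f_i(v): v\text{ a vertex of }\t\}$ for every simplex $\t$ and every component $i$. Consequently, whenever $\s$ is a face of $\t$, the vertex set of $\s$ is contained in that of $\t$, so $f_i(\s)=\max\{f_i(v):v\in\s\}\le\max\{f_i(v):v\in\t\}=f_i(\t)$ for all $i$; that is, $f(\s)\preceq f(\t)$.

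Next I would invoke the definition of the lower star directly. Let $\t\in L_f(\s)$. By definition, $\s$ is a face of $\t$ and $f(\t)\preceq f(\s)$. The monotonicity just established then gives the reverse inequality $f(\s)\preceq f(\t)$. Since $\preceq$ is a partial order on $\Z^n$, these two inequalities yield $f(\t)=f(\s)$, i.e.\ $\t\in f^{-1}(f(\s))$. As $\t$ was an arbitrary element of $L_f(\s)$, this proves $L_f(\s)\subseteq f^{-1}(f(\s))$.

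I do not expect any genuine obstacle here: the statement is essentially a formal consequence of the max-extension. The single point deserving a line of care is the passage from the recursive facet-wise definition of $f_i$ to the closed-form vertex-wise maximum, which is what actually encodes the monotonicity of $f$ on faces; once that is in hand, the antisymmetry of $\preceq$ finishes the argument.
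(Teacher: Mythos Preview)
Your proof is correct and follows essentially the same approach as the paper's: both combine the monotonicity of $f$ along the face relation with the lower-star inequality $f(\t)\preceq f(\s)$, then use antisymmetry of $\preceq$. The only difference is cosmetic: you spell out the monotonicity via the vertex-wise maximum formula, whereas the paper simply states that $f$ is non-decreasing with dimension.
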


\begin{proof}
By definition, $f$ is not decreasing with dimension, so that $f(\s)\preceq f(\t)$ for every coface $\t$ of $\s$. By definition of lower star,  $\t\in  L _f(\s)$ implies that $f(\t)\preceq f(\s)$. Thus,  for $\t\in  L _f(\s)$, $f(\t)\preceq f(\s)\preceq f(\t)$, yielding the claim.
\end{proof}

Next we see that there are simplices, which we  call {\em primary}, whose lower stars coincide with level sets and therefore form a partition of $K$.

\begin{lemma}\label{lem:uniquesimpl}
For every $u\in f(K)$, there exists a unique  simplex $\s\in K$   such that $f^{-1}(u)=L_f(\s)$. 
\end{lemma}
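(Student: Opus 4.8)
The plan is to exhibit the simplex $\sigma$ explicitly using the component-wise injectivity of $f$, and then to verify both that its lower star is the entire level set and that no other simplex enjoys this property. The guiding intuition is that the value $u$ determines, coordinate by coordinate, a distinguished collection of vertices that must lie in every simplex of the level set; the face they span will be the sought-after $\sigma$.

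First I would fix $u\in f(K)$ and examine an arbitrary simplex $\tau$ with $f(\tau)=u$. Recall that the extension rule yields $f_i(\tau)=\max\{f_i(v):\text{$v$ a vertex of }\tau\}$ for each coordinate $i$. Since $f_i$ is injective on $K_0$, the maximum $u_i$ is attained at exactly one vertex of $\tau$, and this vertex is forced to be the unique vertex $w_i\in K_0$ with $f_i(w_i)=u_i$ -- an object depending only on $u$, not on $\tau$. Consequently $w_i\in\tau$ for every $\tau\in f^{-1}(u)$, so the vertices $w_1,\dots,w_n$ (not necessarily distinct) are common to all simplices of the level set. I would then let $\sigma$ be the face they span; it lies in $K$ since it is a subset of the vertices of any such $\tau$, and a direct computation gives $f_i(\sigma)=\max_j f_i(w_j)=u_i$, so $f(\sigma)=u$.

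With $\sigma$ in hand, I would establish $L_f(\sigma)=f^{-1}(u)$. The inclusion $L_f(\sigma)\subseteq f^{-1}(f(\sigma))=f^{-1}(u)$ is precisely \cref{lem:levelset}. For the reverse inclusion, any $\tau\in f^{-1}(u)$ contains $\sigma$ as a face (by the previous paragraph) and satisfies $f(\tau)=u=f(\sigma)$, hence $f(\tau)\preceq f(\sigma)$, which is exactly the condition defining membership in $L_f(\sigma)$. This settles existence.

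Finally, for uniqueness, suppose $\sigma'$ also satisfies $f^{-1}(u)=L_f(\sigma')$. Since $\sigma'$ trivially belongs to its own lower star, $f(\sigma')=u$, so both $\sigma$ and $\sigma'$ lie in $f^{-1}(u)$. Then $\sigma'\in L_f(\sigma)$ forces $\sigma$ to be a face of $\sigma'$, while $\sigma\in L_f(\sigma')$ forces $\sigma'$ to be a face of $\sigma$; antisymmetry of the face relation gives $\sigma=\sigma'$. The only delicate point -- and the crux of the whole argument -- is the appeal to component-wise injectivity to pin down the vertices $w_i$ globally, independently of the chosen $\tau$; everything else is bookkeeping with the definitions of the extension and of the lower star.
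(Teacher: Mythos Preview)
Your proof is correct and follows essentially the same route as the paper: both arguments identify, via component-wise injectivity, the unique vertices $w_i$ (the paper calls them $v_i$) with $f_i(w_i)=u_i$, take $\sigma$ to be the simplex they span, verify that $\sigma$ is a face of every $\tau\in f^{-1}(u)$ with $f(\sigma)=u$, and then obtain the two inclusions via \cref{lem:levelset} and the definition of the lower star; uniqueness is handled identically in both by the mutual-face argument. Your write-up is slightly more explicit in checking $f_i(\sigma)=u_i$ via $\max_j f_i(w_j)$, but the underlying argument is the same.
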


\begin{proof}
In order to prove uniqueness, suppose there are two cells $\s,\s'\in K$ such that $L_f(\s)=f^{-1}(u)=L_f(\s')$. Because any cell belongs to its own lower star,  we get $\s'\in L_f(\s)$ and $\s\in L_f(\s')$, implying that $\s'$ is a face of $\s$ and $\s$ is a face of $\s'$. Hence, $\s=\s'$. 
Let us  prove existence. By  component-wise injectiveness of $f$ on the vertices of $K$, if $u\in f(K)$, then for each $1\le i\le n$  there exists a unique vertex $v_i$ such $f_i(v_i)=u_i$. By the definition of $f$,  $v_i$ is a face of $\t$ for every $\t\in f_i^{-1}(u_i)$. Thus, $v_i$ is a face of $\t$ for every $\t\in f^{-1}(u)$.  The simplex $\s$ generated by all such vertices $v_i$, $1\le i\le n$, is also a face of $\t$ for every $\t\in f^{-1}(u)$. Moreover, $f(\s)=u=f(\t)$. Hence, for every $\t\in f^{-1}(u)$, we have $\t\in L_f(\s)$, implying $f^{-1}(u)\subseteq L_f(\s)$. On the other hand, from  $f(\s)=u$, by \cref{lem:levelset}, we also deduce that $ L_f(\s)\subseteq f^{-1}(u)$, concluding the proof.
\end{proof}

The next result shows that the number of $V$-paths exiting from a simplex is equal to the codimension of that simplex with respect to the primary simplex whose lower star it belongs to.  

\begin{lemma}\label{lem:p cells}
 Let $\s\in K$ be a primary simplex.    Each simplex $\t\in L _f(\s)$ with  $\dim\t-\dim\s=p$  has exactly $p$ facets contained in $L _f(\s)$.
 \end{lemma}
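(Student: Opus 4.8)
The plan is to reduce the statement to a count of the vertices of $\t$, using the explicit description of the primary simplex $\s$ supplied by the proof of \cref{lem:uniquesimpl} together with the monotonicity of $f$ under the face relation. First I would record two facts already in hand. Writing $u=f(\s)$, the proof of \cref{lem:uniquesimpl} shows that the vertex set of the primary simplex $\s$ is exactly $\{v_1,\dots,v_n\}$, where $v_i$ is the unique vertex with $f_i(v_i)=u_i$, that $f(\s)=u$, and that $L_f(\s)=f^{-1}(u)$. Moreover, from the defining formula $f_i(\t)=\max\{f_i(\sigma'):\sigma' \text{ a facet of } \t\}$ one obtains by induction that $f_i(\rho)=\max\{f_i(v): v\text{ a vertex of }\rho\}$, so $f$ is monotone under the face relation: if $\rho$ is a face of $\rho'$ then $f(\rho)\preceq f(\rho')$.

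Next I would fix $\t\in L_f(\s)$ with $\dim\t-\dim\s=p$ and count the facets of $\t$ that lie in $L_f(\s)$. Since $\t\in L_f(\s)=f^{-1}(u)$, we have $f(\t)=u$ and $\s$ is a face of $\t$, so every vertex of $\s$ is a vertex of $\t$. The facets of $\t$ are the $\dim\t+1$ simplices obtained by deleting a single vertex. Deleting a vertex belonging to $\s$ destroys the property that $\s$ is a face of the result, so such a facet does not contain $\s$ and hence cannot lie in $L_f(\s)$; deleting a vertex not belonging to $\s$ keeps $\s$ as a face. As $\t$ has $\dim\t+1$ vertices and $\s$ has $\dim\s+1$ of them, there are exactly $(\dim\t+1)-(\dim\s+1)=p$ vertices whose deletion preserves $\s$, hence exactly $p$ facets of $\t$ containing $\s$.

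It then remains to verify that each of these $p$ facets actually belongs to $L_f(\s)$. Let $\rho$ be a facet of $\t$ obtained by deleting a vertex outside $\s$, so that $\s$ is a face of $\rho$ and $\rho$ is a face of $\t$. By monotonicity, $f(\rho)\preceq f(\t)=u$, while $u=f(\s)\preceq f(\rho)$ since $\s$ is a face of $\rho$; hence $f(\rho)=u$ and therefore $\rho\in f^{-1}(u)=L_f(\s)$. This shows that all $p$ facets containing $\s$ lie in $L_f(\s)$, while the remaining facets do not, giving exactly $p$ as claimed.

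The only real content is this last verification that $\s$ being a face of $\rho$ forces $f(\rho)=u$; everything else is a vertex count. I expect no serious obstacle here, as it rests entirely on the monotonicity of $f$ and on the equality $f(\s)=u$, both already established. The single point to handle with care is that $L_f(\s)=f^{-1}(u)$ and that the vertex set of $\s$ is precisely $\{v_1,\dots,v_n\}$, so that membership in $L_f(\s)$ is equivalent to the conjunction \emph{$\s$ is a face and the $f$-value equals $u$}; these are exactly the properties furnished by \cref{lem:uniquesimpl}.
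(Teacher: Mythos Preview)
Your proposal is correct and follows essentially the same approach as the paper: both arguments count the facets of $\t$ that contain $\s$ as a face via the vertex-deletion description and observe that membership in $L_f(\s)$ requires $\s$ to be a face, so the remaining facets are excluded. You are a bit more explicit than the paper in verifying that each of the $p$ facets actually lies in $L_f(\s)$ (via the squeeze $f(\s)\preceq f(\rho)\preceq f(\t)$), whereas the paper leaves this implicit in the sentence ``all the facets of $\t$ that admit $\s$ as a face belong to $L_f(\s)$''; but the substance is the same.
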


\begin{proof}
 By \cref{lem:uniquesimpl},  $ L_f(\s)\subseteq f^{-1}(f(s))$. Hence, all the facets of $\t$ that admit $\s$ as a face belong to  $L_f(\s)$. The total number of such facets of $\t$  is equal to $p$.  Indeed,  letting $t=\dim \t$ and $s=\dim\s$, so that $t=s+p$, a facet of $\t$ that admits $\s$ as a face is generated by $t$ vertices, chosen among the $t+1$ vertices of $\t$,  $s+1$ of which are already fixed as generators of $\s$.  Therefore, the number of such facets is ${(t+1)-(s+1) \choose t-(s+1)}={p \choose p-1}=p$. 
\end{proof}

The following lemma shows that branching of  $V$-paths is not possible in the lower star of a primary simplex, provided that  $K$ is of dimension at most 2.

\begin{lemma}\label{lem:nobranch}
No $V$-path $(\s_0,\t_0,\s_1,\t_1,\ldots, \s_{r-1},\t_{r-1},\s_r)$ containing only cells of  $L_f(\s)$  can branch, for any $\s\in K$, provided that $\dim \t_i-\dim\s\le 2$.
\end{lemma}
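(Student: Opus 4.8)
The plan is to convert the non-branching statement into an elementary count of the facets of the $(q+1)$-cells appearing along the $V$-path, leveraging \cref{lem:p cells}. First I would pin down what branching means here: a $V$-path branches at $\t_i$ if, besides the incoming facet $\s_i$, the cell $\t_i$ admits two or more distinct facets lying in $L_f(\s)$ that could each play the role of $\s_{i+1}$ and continue the path. Since the step $\s_i\mapsto\t_i$ is forced (each non-critical cell is paired to a \emph{unique} cofacet), branching can only be produced at the step $\t_i\mapsto\s_{i+1}$, so it suffices to bound the number of admissible facets of each $\t_i$.

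Next I would record the dimension bookkeeping of a $V$-path. Because $(\s_i,\t_i)\in V$ makes $\s_i$ a facet of $\t_i$, and $\s_{i+1}$ is by definition a facet of $\t_i$, every $\s_i$ has one and the same dimension $q$ while every $\t_i$ has dimension $q+1$. When $\s$ is primary, all cells of the path are cofaces of $\s$, hence $q\ge\dim\s$ and $p:=\dim\t_i-\dim\s=q+1-\dim\s\ge 1$; the hypothesis $\dim\t_i-\dim\s\le 2$ then forces $p\in\{1,2\}$. For an arbitrary $\s\in K$ I would first pass to the unique primary simplex $\s^{\ast}$ with $f(\s^{\ast})=f(\s)$, using \cref{lem:levelset} and \cref{lem:uniquesimpl} to get $L_f(\s)\subseteq f^{-1}(f(\s))=L_f(\s^{\ast})$; since $\dim K\le 2$ this keeps the relevant codimension at most $2$.

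The core step is then a direct application of \cref{lem:p cells}: each $\t_i$ has exactly $p\le 2$ facets inside $L_f(\s)$. One of these is necessarily the incoming facet $\s_i$ (a facet of $\t_i$ that lies in the path, hence in $L_f(\s)$), so at most $p-1\le 1$ facets of $\t_i$ in $L_f(\s)$ remain available as a continuation $\s_{i+1}\ne\s_i$. Consequently the continuation is unique at every stage (when $p=1$ there is none, which is exactly the degenerate case $q=\dim\s$, i.e.\ $\s_i=\s$, where no nontrivial path exists), and the $V$-path cannot branch.

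I expect the difficulties to be bookkeeping rather than conceptual. The point that must be used carefully is the dimension-$2$ hypothesis: it is precisely the bound $p\le 2$ that yields $p-1\le 1$, and the statement genuinely fails in higher dimension, where a $(q+1)$-cell with $p=3$ would offer two distinct continuations and hence a branch. A secondary care is to phrase the reduction to a primary simplex cleanly, so that \cref{lem:p cells}, stated only for primary simplices, legitimately applies to the general $\s$ allowed in the statement.
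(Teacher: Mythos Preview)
Your argument is essentially the same as the paper's: branching at $\t_i$ would require a facet of $\t_i$ in $L_f(\s)$ distinct from both $\s_i$ and $\s_{i+1}$, hence at least three facets in the lower star, which contradicts \cref{lem:p cells} since $p=\dim\t_i-\dim\s\le 2$. The paper phrases this as a contradiction; you phrase it as a direct count (``at most $p-1\le 1$ continuations''), but the content is identical.

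One genuine point you raise that the paper glosses over: \cref{lem:p cells} is stated only for primary $\s$, whereas the lemma claims ``for any $\s\in K$''. The paper's proof simply applies \cref{lem:p cells} with $p=\dim\t_i-\dim\s$ without comment, so it implicitly treats $\s$ as primary (and indeed the lemma is only ever invoked for primary $\s$ downstream). Your reduction to the primary simplex $\s^{\ast}$ is the right instinct, but note that passing from $\s$ to its face $\s^{\ast}$ \emph{increases} the codimension $\dim\t_i-\dim\s^{\ast}\ge\dim\t_i-\dim\s$, so you do need the ambient hypothesis $\dim K\le 2$ (stated at the start of the section, though not in the lemma itself) to keep it bounded by $2$. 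This is a fair reading of the intended scope.
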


\begin{proof}
By contradiction, let  $(\s_0,\t_0,\s_1,\t_1,\ldots, \s_{r-1},\t_{r-1},\s_r)$  in $L_f(\s)$  branch at some simplex $\t_i$ with $0\le i\ge r-1$. Because $V$ is a discrete vector field, $\s_{i}$ can be paired only to $\t_{i}$ and, analogously, $\s_{i+1}$ can be paired only to $\t_{i+1}$. Hence,  the simplex $\t_i$ must have at least one more facet, different from  $\s_i$ and $\s_{i+1}$, belonging $L _f(\s)$, for a branching to occur.  Because $\dim \t_i-\dim\s\le 2$, this contradicts  \cref{lem:p cells} with $p=\dim \t_i-\dim\s$. 
\end{proof}

\begin{lemma}\label{lem:twostatements}
Let $\s$ be a primary simplex. Let  $\t$ be a critical cell of $V$ belonging to $L_f(\s)$ with $\dim\t-\dim\s=2$. Let $\r'$ and $\r''$ be the two distinct facets of  $\t$ also belonging to $L_f(\s)$.  There exists one and only one simplex $\bar \s$  such that $\r'$ and $ \r''$ are connected to $\bar \s$ via  $V$-paths entirely contained in $L_f(\s)$.
\end{lemma}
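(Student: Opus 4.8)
The plan is to push everything down to the combinatorics of the link of $\s$ inside the lower star and to the behaviour of \texttt{HomotopyExpansion} on it. Since $\dim K\le 2$ and $\dim\t-\dim\s=2$, the primary simplex $\s$ must be a vertex and $\t$ a triangle; by \cref{lem:p cells} its two facets $\r',\r''$ in $L_f(\s)$ are then edges, each having $\s$ as its \emph{unique} facet inside $L_f(\s)$. I would encode $L_f(\s)$ as the cone over a \emph{link graph} $G$, whose nodes are the edges $[\s,a]\in L_f(\s)$ and whose arcs are the triangles $[\s,a,b]\in L_f(\s)$. Because $V$ is consistent with the filtration, every discrete vector meeting $L_f(\s)$ lies entirely in $L_f(\s)$, so under this encoding the restriction of $V$ becomes an acyclic matching on $G$ in which at most one node is matched downward to $\s$, while a $V$-path joining edges of $L_f(\s)$ becomes a trajectory in $G$ following the up-pairings. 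Since each triangle has exactly two facets in $L_f(\s)$ (\cref{lem:p cells}), the no-branching property of \cref{lem:nobranch} makes such a trajectory deterministic, and since $V$ has no closed $V$-path it is finite; hence from each of $\r'$ and $\r''$ there is a unique maximal $V$-path, ending at a well-defined edge, and $\bar\s$ will be identified with these termini once I show they agree.

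Next I would establish existence of a common terminus. Because $\t$ is critical, both $\r'$ and $\r''$ must already be paired (to cells other than $\t$) at the moment $\t$ is processed by \texttt{HomotopyExpansion}: otherwise $\t$ would have a free facet and would be paired with it rather than declared critical (see the pseudocode in \cref{app:alg}). Thus $v'$ and $v''$ both lie in the region already reached by the expansion, and since $\{v',v''\}$ is an arc of $G$ they lie in a single connected component $C$ of $G$. The decisive feature of \texttt{HomotopyExpansion} is that inside $C$ the matched arcs form a spanning tree rooted at one terminal node $\bar v$ (either the node matched to $\s$, or, when $C$ does not contain that node, the single critical edge that seeds $C$). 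Consequently the trajectories issued from $v'$ and from $v''$ both arrive at $\bar v$, so the maximal $V$-paths from $\r'$ and $\r''$ both terminate at $\bar\s=[\s,\bar v]$.

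Uniqueness is then immediate from the first step: the determinism of the forward trajectory gives each of $\r'$ and $\r''$ a single maximal $V$-path, hence a single terminus, and we have just shown these termini coincide, so $\bar\s$ is the one and only simplex reached from both. The step I expect to be the real obstacle is precisely the spanning-tree property of \texttt{HomotopyExpansion} invoked above, namely that within a connected component of the link the routine leaves no critical \emph{edge} and exactly one terminal, the only critical cells being the cycle-closing triangles. This is what forbids the two trajectories from ending at different sinks — a phenomenon that can indeed occur for an arbitrary consistent gradient but not for the one produced by the algorithm — and it must be read off from the queue-based pairing rule of \cref{app:alg}. Everything else is routine bookkeeping resting on \cref{lem:p cells} and \cref{lem:nobranch}.
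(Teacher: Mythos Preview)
Your plan tracks the paper's argument closely: both (i) observe that $\r'$ and $\r''$ must already be classified when $\t$ is declared critical, (ii) follow the unique maximal $V$-paths issuing from $\r'$ and $\r''$ using \cref{lem:nobranch}, and (iii) argue these two paths terminate at the same simplex. The only real difference is in how (iii) is carried out. The paper does not pass through your link-graph encoding or a spanning-tree statement; instead it argues directly by cases on the two termini. Each terminus is either critical or coincides with the unique cell paired to $\s$ at line~8, and the mixed case as well as the case of two distinct critical termini are ruled out because, after the earlier of the two classification events, the chain of cofacets along the remaining path keeps \texttt{Ord1} nonempty, so the later event cannot be a critical classification. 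This is exactly the queue-discipline fact you flag as the obstacle; the paper supplies it in a few lines of case analysis rather than as a structural lemma about the matching on the link. Your formulation is a bit more conceptual and would, if proved, yield a slightly stronger statement (one terminal per link component rather than just agreement of these two particular trajectories), but it requires the same algorithmic bookkeeping underneath.

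Two minor remarks. First, your final paragraph says the routine \lq\lq leaves no critical edge\rq\rq\ in a component, which contradicts your earlier (correct) allowance that a component not containing the node paired to $\s$ is seeded by a single critical edge; the statement you actually need is simply that each component carries a unique terminal. Second, the paper's argument never specializes $\s$ to a vertex: it uses only $\dim\t-\dim\s=2$ through \cref{lem:p cells} and \cref{lem:nobranch}, so it applies verbatim to primary simplices of any dimension whenever that codimension condition holds.
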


\begin{proof}
Because $\dim \t=\dim\s+2$, by \cref{lem:p cells} $\r'$ and $\r''$ are the only  two facets of $\t$ contained in $L_f(\s)$.  
Without loss of generality, we can assume $\r'$ is classified by  \texttt{HomotopyExpansion}  earlier than $\r''$. When it happens, either $\r'$ is classified as critical or paired to another cell. If $\r'$ is paired to another cell, it cannot be $\s$ otherwise $\r'i$ enters \texttt{Ord0},  $\t$ enters \texttt{Ord1}, and $r''$ and $\t$ are eventually paired at line 17, contradicting the assumption that $\t$  is critical. An analogous argument shows that $\r'$ cannot be classified as critical. Thus, $\r'$ needs to be  paired to a cofacet different from $\t$. As a consequence of such pairing, $\r''$ enters \texttt{Ord0}, and $\t$ enters \texttt{Ord1}. Again, $\r''$ cannot be classified as critical, nor paired to $\t$ because we are assuming that $\t$ will eventually be classified as critical.  Thus, $\r''$ will rather be paired to some other cofacet $\t'$ that entered into  \texttt{Ord1} before $\t$.

Let us now consider two maximal $V$-paths $(\r'_0,\t'_0,\ldots, \t'_{r-1}, \r_r)$  and $(\r''_0,\t''_0,\ldots,\t''_{s-1}, \r''_s)$ starting from $\r'$ and $\r''$, respectively, i.e. $\r'=\r'_0$ and $\r''=\r_0''$. By \cref{lem:nobranch}, there are only two such paths. 
Moreover, because such $V$-paths are maximal, $\r'_r$ and $\r''_s$ must be either critical or paired with $\s$. Let us consider all the possible cases. If both $\r'_r$ and $\r''_s$ are paired to $\s$,  then the claim is proved with $\bar \s= \r'_r =\r''_s$ the unique simplex paired to $\s$. If one of them is paired to $\s$ and the other is critical, we get a contradiction. Indeed, the one paired to $\s$ is classified earlier because the instruction is at line 8. After that,  \texttt{Ord1} is never empty, so that the other one cannot be classified as critical. Analogously, if $\r'_r \ne \r''_s$ and both are classified critical, then we get a contradiction because after the first one is classified as critical  \texttt{Ord1} is never empty. The only remaining case is when $\r'_r =\r''_s$ is classified as critical, which again proves the claim. 
\end{proof}

\begin{theorem} \label{thm:2d}	
For every  simplicial complex $K$ of dimension not greater than 2, and for every component-wise injective function $f:K_0\to \Z^n$, there exists a   relative-perfect discrete gradient vector field $V$ consistent with the sublevel set multi-filtration ${\mathcal K}=\{K^u\}_{u\in\Z^n}$  induced by $f$ by setting $K^u=\{\s\in K: f(\s)\preceq u\}$ and $f_i(\s)=\max\{f_i(\s): \mbox{$\s$ is a facet of $\t$}\} $.
\end{theorem}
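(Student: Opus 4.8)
The plan is to construct $V$ explicitly by running \texttt{HomotopyExpansion} on the lower star of each primary simplex, and then to certify relative-perfectness (\cref{def:perfect}) via the sufficient criterion of \cref{prop:crit}: it is enough to show that, for every grade $u$ and every degree $q$, the relative Morse boundary $\partial^{rel}_q$ on $C_*(M^u,\bigcup_{i=1}^n M^{u-e_i})$ vanishes. By \cref{lem:uniquesimpl} the level sets $f^{-1}(u)=L_f(\sigma_u)$ of the primary simplices $\sigma_u$ partition $K$, so \texttt{HomotopyExpansion} may be run independently on each of them; since every discrete vector it produces lies in a single level set, $V$ is automatically a discrete gradient consistent with $\mathcal K$. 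By one-criticality the cells of $M^u-\bigcup_i M^{u-e_i}$ are exactly the critical cells of $f$-value $u$, so $\partial^{rel}_q$ records only separatrices between critical cells of the single lower star $L_f(\sigma_u)$.

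First I would observe that along any $V$-path the value of $f$ is non-increasing: each vector $(\alpha,\beta)$ produced by \texttt{HomotopyExpansion} satisfies $f(\alpha)=f(\beta)$, and each facet step can only decrease $f$. Hence a separatrix joining two critical cells both of $f$-value $u$ is realized by a $V$-path contained entirely in $L_f(\sigma_u)$, which reduces the whole question to the internal combinatorics of a single lower star, where \cref{lem:p cells} and \cref{lem:nobranch} apply. Using that $\sigma_u$ is a face of every cell of $L_f(\sigma_u)$ (together with \cref{lem:levelset}), $\sigma_u$ is the unique cell of minimal dimension in the level set; when $L_f(\sigma_u)=\{\sigma_u\}$ there is a single critical cell and nothing to prove, and otherwise \texttt{HomotopyExpansion} pairs $\sigma_u$ with a cofacet, so $\sigma_u$ is not critical and every critical cell of the level set has dimension strictly greater than $\dim\sigma_u$.

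The verification then proceeds by case analysis on $\dim\sigma_u$. For $\partial^{rel}_1$ this structural fact already suffices: a critical vertex arises only in a vertex-primary level set, where the single vertex $\sigma_u$ is critical precisely when $L_f(\sigma_u)=\{\sigma_u\}$ carries no edge, so a critical vertex never coexists with a critical edge and $\partial^{rel}_1=0$. The delicate map is $\partial^{rel}_2$, for a critical triangle $\tau$ in the lower star of a vertex $\sigma_u$: here $\dim\tau-\dim\sigma_u=2$, so by \cref{lem:p cells} exactly two facets $\rho',\rho''$ of $\tau$ lie in $L_f(\sigma_u)$, and by \cref{lem:nobranch} the $V$-paths issuing from them do not branch. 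By \cref{lem:twostatements} these two paths terminate at one and the same cell $\bar\sigma$, so any critical edge reachable from $\tau$ inside the level set must equal $\bar\sigma$ and is reached by exactly the two paths from $\rho'$ and $\rho''$; the count of separatrices is therefore even, $\kappa'(\tau,\cdot)=0$, and $\partial^{rel}_2(\tau)=0$. When $\sigma_u$ is instead an edge, every critical triangle has only $\sigma_u$ as a lower-star facet while $\sigma_u$ itself is paired, so there is no critical edge to receive a separatrix and $\partial^{rel}_2=0$ trivially.

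The main obstacle is exactly this $\partial^{rel}_2$ computation: one must rule out the possibility that the two separatrices emanating from a critical triangle land on two distinct critical edges, and \cref{lem:twostatements} is precisely what forces them to cancel. The dimensional hypothesis $\dim K\le 2$ enters here in two ways, namely to make \cref{lem:twostatements} applicable (limiting the lower-star facets of $\tau$ to two and forbidding branching) and to exclude any higher relative boundary maps. Having established $\partial^{rel}_q=0$ for all $q$ and all $u$, \cref{prop:crit} combined with \cref{lem:relative} yields $m_q(u)=\dim H_q\!\left(K^u,\bigcup_{i=1}^n K^{u-e_i}\right)$ for every $q$ and $u$, i.e.\ $V$ is relative-perfect, which completes the argument.
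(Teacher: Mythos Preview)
Your proposal is correct and follows essentially the same route as the paper: both reduce to the triviality criterion of \cref{prop:crit}, localize to a single lower star $L_f(\sigma_u)$, and dispatch the nontrivial case via \cref{lem:p cells}, \cref{lem:nobranch}, and \cref{lem:twostatements} to obtain an even separatrix count over $\Z/2\Z$. The only differences are cosmetic---you organize the case analysis by the degree $q$ of $\partial^{rel}_q$ (and then by $\dim\sigma_u$) whereas the paper organizes it by the codimension $p=\dim\tau-\dim\sigma_u$---and you make explicit the monotonicity of $f$ along $V$-paths that confines the relevant separatrices to the level set, a point the paper leaves implicit.
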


\begin{proof}
By  \cref{prop:crit}, it suffices to show that, for any filtration grade $u\in\Z^n$ and any homology degree  $q\in\Z$,
 each $q$-simplex $\t$ of $V$  in $M^u-\bigcup_{i=1}^n M^{u-e_i}$ satisfies $\partial ^{rel}_q \t = 0$.

Let $\t$ be a $q$-simplex   belonging to $M^u-\bigcup_{i=1}^n M^{u-e_i}$. In other words, $\t$ is a critical $q$-simplex of $V$  belonging to $K^u-\bigcup_{i=1}^n K^{u-e_i}$. Because  $K^u-\bigcup_{i=1}^n K ^{u-e_i}=f^{-1}(u)$, and  because by \cref{lem:uniquesimpl} there is a  unique primary simplex  $\s$ in $K$ such that $L _f(\s)=f^{-1}(u)$, we have $\t\in L _f(\s)$. Let $p=\dim\t-\dim\s$. Since  the sub-routine {\texttt HomotopyExpansion}  works independently over each $ L _f(\s)$ with $\s$ a primary simplex, we can confine ourselves to  showing that for each of the cases  $p=0,1,2$  the  boundary of $\t$ in $ L _f(\s)$ relative to $\bigcup_{i=1}^n K^{u-e_i}$ is trivial. 
  
If  $p=0$, that is $\t$ is a critical simplex of the same dimension as  $\s$, then  $\t= \s$ and line 8 in  the sub-routine  \texttt{HomotopyExpansion} ensures that $L _f(\s)=\{\s\}$. Thus, for $p=0$,  we have  $\partial^{rel}\t=0$.
  
If $p=1$, then  $\s$ is the only  facet of $\t$ in $L _f(\s)$. Line 8  in  the sub-routine  \texttt{HomotopyExpansion} ensures that  $\s$ is non-critical, implying that  $\partial^{rel}\t=0$ also in this case.
 
If $p=2$,  we prove that $\partial^{rel}\t=0$ by analyzing all the maximal $V$-paths contained in  $L _f(\s)$ starting from the facets of the critical cell $\t$. Because $p=2$, $\t$ has exactly two facets in $L _f(\s)$ by \cref{lem:p cells}. 
 By  \cref{lem:twostatements}, the two faces of $\t$ in $L _f(\t)$  admit each a $V$-path to the same simplex $\r$.
  If $\r$ is not critical, then $\partial^{rel}\t=0$, trivially.  Assume on the contrary that $\r$ is critical.
  By  \cref{lem:nobranch},  $V$-paths cannot branch  inside $L_f(\s)$.
  This means that precisely two $V$-paths connect $\t$ to $\r$.  Hence, $\partial^{rel}\t=0$ in this case as well because we are taking coefficients in $\Z/2\Z$.  
\end{proof}

As mentioned above, a consequence of \cref{thm:2d} is that, even if some simplicial complexes of dimension 2 such as the dunce hat do not admit  perfect discrete gradient vector fields with respect to standard homology, they always admits relative-perfect gradients. However, the lack of perfectness with respect to standard homology implies a lack of relative-perfectness in dimension 3. For example,  the simplicial complex obtained taking the cone over the dunce hat from a ninth vertex, endowed with the filtration induced by the vertex indexing, does not admit a relative-perfect discrete gradient vector field.

\section{Conclusions}\label{sec:conclusions}

Inspired by Morse inequalities, we have introduced in~\cref{def:perfect} the notion of relative-perfectness of a discrete gradient vector field consistent with a one-critical multi-parameter filtration. 
Relative-perfectness boils down to the Morse complex of the gradient vector field having the minimal number of critical cells necessary to preserve multi-parameter persistence. 
Strictly speaking, such a definition had no previous one-parameter counterpart.
However, we have shown that relative-perfectness generalizes to the multi-parameter case the optimality property satisfied by the output discrete gradient field obtained through algorithm~\cite{RobWooShe11}. 

Specifically for the multi-parameter case, we have highlighted the phenomenon of ``virtual'' critical cells, already treated in~\cite{Knudson}, concerning homological changes not depending on any particular critical cells added. In the same way, we have noticed the dual phenomenon of ``virtual'' homological changes where critical cells are added to preserve, rather than to change, homology.
Both phenomena were known to be algebraically captured by Betti tables of the persistence-module associated to the multi-filtration.

For the case of two-parameter filtrations, we have proven in~\cref{cor:inequalities} sharp inequalities relating a relative-perfect discrete gradient and the Betti tables of the associated persistence module.
These results show a link between multi-parameter persistent homology and discrete Morse theory that can be leveraged for a better understanding of the former.
For instance, our results could turn out useful in situations where one first needs to compute Betti tables as a preprocessing step ahead of persistence computations as in RIVET \cite{Lesnick2015arXiv}, because the computations of critical cells can be exploited in both steps.
The results of this paper suggest that  analogous inequalities could hold for a larger number of parameters. However, deriving such inequalities would almost surely require more sophisticated techniques of homological algebra such as the spectral sequence of Mayer-Vietoris.

Concerning computability, we have proven that algorithms \cite{Allili2019}\cite{Iuricich2016} under their assumptions, i.e., for one-critical filtrations, ensure the relative-perfectness of the output discrete gradient field whenever in the case of simplicial complexes of dimension up to 2 with no restriction on the number of parameters in the filtration.
A limitation of such algorithmic construction is  that the gradient vector field is computed from a function which is  extended from the vertices  to other simplices by taking the maximum. While this may be natural for spatial data, it is not so for a Vietoris-Rips complex built from finite metric spaces. 
From another perspective, it would be interesting to ascertain whether  the algorithms considered in those papers permit the construction of relative-perfect gradient vector fields also for simplicial complexes of dimension higher than two. A counterexample to this is easily built by coning on the dunce hat in~\cref{fig:duncehat}. However, this does not exclude the possibility of such result provided that lower links of simplices are good enough. 
In general, defining suitable classes of simplicial complexes admitting relative-perfect discrete gradients seems not to be more difficult than it is in the one-parameter or the classical Morse theory case.

Our contribution in defining relative-perfectness can be applied to computing multi-parameter persistent homology through a preprocessing performing a reduction before actual computations. Other approaches exist.
As discussed in~\cite{Fugacci2019chunk}, it is important to remark that in the multi-parameter case, by reducing to the Morse complex by means of a relative-perfect discrete gradient field, 
does not ensure to obtain the ``smallest'' filtered complex preserving the persistence module. Rather, the obtained filtered object is the most convenient among the Morse complexes preserving the multi-filtration.
Indeed, this gap can be filled as proposed in~\cite{Fugacci2019chunk} reducing directly on the boundary matrix of the multi-filtered chain complex.
In terms of chain complex size, authors prove their reduction to be optimal within the class of all filtered chain complexes whose homology is isomorphic to the input one. 
In that work, authors highlight that a consistent Morse complex belong to that class but not all the elements in the class are Morse complexes.
Our result in \cref{prop:ker-coker} ensures that, their notion of optimality is satisfied by a relative-perfect reduction.
Thus, relative-perfectness, whenever applicable, captures the same idea but places it into the framework of discrete Morse theory. 
In terms of computation performance, this implies that, for simplicial complexes of dimension up to 2, the algorithms \cite{Allili2019},\cite{Iuricich2016} satisfy the same optimality property as in \cite{Fugacci2019chunk}. For higher dimensions, the last two mentioned algorithms do not ensure the optimality achieved by the algorithm \cite{Fugacci2019chunk}. Moreover, timings have been compared showing that \cite{Fugacci2019chunk} is generally an order of magnitude faster than \cite{Iuricich2016}.
However, we remark that a discrete gradient stores additional information with respect to the only boundary matrix.
For instance, the gradient provides an implicit remapping of the Morse complex onto the original complex. As already stated, the interplay between discrete Morse theory and multi-parameter persistence might shed some light on the understanding of the latter's invariants.

This last observation motivates another possible direction for future works towards the analysis and visualization of multivariate data.
Indeed, via a discrete gradient field, one can get a topologically meaningful subdivision of the domain according to a multi-filtration.
A pair in a discrete gradient is consistent with a multi-filtration whenever such pairing is possible for all filtration components.
This can be exploited to give a meaning and to detect interdependence among components.
More practically, our future interest consists in comparing  relative-perfectness to the analysis and visualization techniques based on classical Pareto points~\cite{Smale1975}, that is points of the domain where it is impossible to increase a component value along a component without decreasing some other component.
Some theoretical results already exist that relate discontinuity points in the persistence space to Pareto points~\cite{Cerri2009}.

\begin{acknowledgements}
Authors wish to thank Ulderico Fugacci for interesting discussions on the results.
\end{acknowledgements}

%
  \section*{Funding information}
This work was partially carried out by the first author within the activities of ARCES (University of Bologna) and under the auspices of INdAM-GNSAGA.
The second author was supported by the Italian MIUR Award “Dipartimento di Eccellenza 2018-2022” - CUP: E11G18000350001, and by the SmartData@PoliTO center for Big Data and Machine Learning technologies. Partial support was also given by the University of Genova, Italy, and by the US National Science Foundation under grant number IIS-1910766.

%
%

\newpage

\appendix
\section{\texttt{Appendix: HomotopyExpansion}}\label{app:alg}

Algorithms like \texttt{ProcessLowerStars} \cite{RobWooShe11} for $n=1$, and \texttt{Matching}  \cite{Allili2019}, or \texttt{ComputeDiscreteGradient} \cite{Iuricich2016},  for $n\ge 1$, build discrete gradient vector fields from the values of a function on the vertices by  first partitioning the simplicial complex into subsets of simplices,  then calling a function like \texttt{HomotopyExpansion} to locally build on each such subset a set of discrete vectors  and a set of unpaired cells. The final discrete  gradient vector field is obtained as the union of all the discrete vectors built by \texttt{HomotopyExpansion}. \texttt{ProcessLowerStars} partitions the simplicial complex by using lower stars of vertices, \texttt{Matching} and \texttt{ComputeDiscreteGradient} do so using lower stars of primary simplices, the difference being in how such lower stars are obtained.

Basically \texttt{HomotopyExpansion} works as follows. When \texttt{HomotopyExpansion} processes the lower star $L_f(\sigma)$ of a simplex $\sigma$,  assuming it   is equipped with a suitable indexing, the simplex $\sigma$ is inserted into the list of critical cells $M_\sigma$ if and only if its lower stars reduces to $\sigma$ itself. Otherwise, $\sigma$ is paired with the cofacet $\delta$ in  $L_f(\sigma)$ that has minimal index value. The algorithm proceeds with further pairings that can be topologically thought of as the
process of constructing $L_f(\sigma)$ by simple homotopy expansions. When no pairing is possible
a simplex is classified as critical and the process is continued from that cell. A cell $\alpha$ is
candidate for belonging to a discrete vector of $V_\sigma$ when the number of its unclassified facets, \_{unclassified}\_{facets}$_{\sigma} (\alpha)$ contains exactly one element whose number of  unclassified facets is zero. For this purpose, the lists \texttt{Ord0} and \texttt{Ord1}, which store simplices with zero and one available unclassified faces respectively, are
created. 

\begin{algorithm}[H]
\caption{\texttt{HomotopyExpansion}}
\label{alg:HomotopyExpansion}
\begin{algorithmic}[1]
\STATE {\bf Input:} The lower star $L_f(\sigma)$ of a simplex $\sigma\in K$ and an indexing map $I$ on its simplices compatible with the facet relation.
\STATE {\bf Output:} A set $V_\sigma$ of discrete vectors and a set $M_\sigma$ of unpaired cells.\\
\IF{$L_f(\sigma)$ contains only $\sigma$}
\STATE add $\sigma$ to $M_\sigma$, set \texttt{classified}($\sigma$):=\TRUE
\ELSE
\STATE set \texttt{Ord0} and \texttt{Ord1} equal to empty ordered lists
\STATE set $\delta:=$ the cofacet of $\sigma$ in $L_f(\sigma)$ of minimal index $I(\delta)$
\STATE add $(\sigma,\delta)$ to $V_\sigma$, set \texttt{classified}($\sigma$):=\TRUE, \texttt{classified}($\delta$):=\TRUE
\STATE append all  $\alpha \in L_f(\sigma)-\{\sigma,\delta\}$ with \texttt{{num}\_{unclassified}\_{facets}}$_{\sigma} (\alpha)= 0$ to \texttt{Ord0} 
\STATE append all  $\alpha \in L_f(\sigma)-\{\sigma\}$ with \texttt{{num}\_{unclassified}\_{facets}}$_{\sigma} (\alpha)$ = 1 and  $\alpha > \delta$ to \texttt{Ord1} 
\WHILE{\texttt{Ord1} $\neq \emptyset$ or \texttt{Ord0}  $\neq \emptyset$}
\WHILE{\texttt{Ord1} $\neq \emptyset$}
\STATE set $\alpha :=$ the first elemnet in \texttt{Ord1}
\IF{\texttt{{num}\_{unclassified}\_{facets}}$_{\sigma} (\alpha$) = 0}
\STATE append $\alpha$ to \texttt{Ord0} 
\ELSE
\STATE for $\lambda \in \texttt{{unclass}\_{facets}}_{\sigma} (\alpha)$, add $(\lambda,\alpha)$ to $V_\sigma$, remove $\lambda$ from \texttt{Ord0} ,
\STATE set \texttt{classified}($\alpha$):=\TRUE, \texttt{classified}($\lambda$):=\TRUE, 
\STATE append all  $\beta \in L_f(\sigma)-\{\sigma\}$ with \texttt{{num}\_{unclassified}\_{facets}}$_{\sigma} (\beta$) = 1 and
either $\beta > \alpha$ or $\beta > \lambda$ to \texttt{Ord1}
\ENDIF
\ENDWHILE
\IF{\texttt{Ord0}  $\neq \emptyset$}
\STATE set $\gamma$ := the first element in \texttt{Ord0} 
\STATE add $\gamma$ to $M_\sigma$, set \texttt{classified}($\gamma$):=\TRUE
\STATE append all  $\tau \in L_f(\sigma)-\{\sigma\}$ with \texttt{{num}\_{unclassified}\_{facets}}$_{\sigma} (\tau$) = 1 
and
$\tau > \gamma$ to \texttt{Ord1}
\ENDIF
\ENDWHILE
\ENDIF
\end{algorithmic}
\end{algorithm}

\bibliographystyle{spmpsci}      
\bibliography{PerfectnessMultiMorse}

\end{document}